\pdfoutput=1
\newif\ifFull
\Fullfalse

\documentclass[10pt]{article}
\topmargin 0pt
\advance \topmargin by -\headheight
\advance \topmargin by -\headsep
\textheight 9in
\oddsidemargin 0pt
\evensidemargin \oddsidemargin
\marginparwidth 0.5in
\textwidth 6.5in
\setlength{\pdfpagewidth}{8.5in}
\setlength{\pdfpageheight}{11in}

\usepackage{graphicx}
\usepackage{url}
\usepackage{epstopdf}
\usepackage{subfig}
\usepackage{boxedminipage}
\usepackage{amsmath}
\usepackage{amsthm}
\usepackage{amssymb}
\usepackage{algorithmic}
\usepackage{color}

\usepackage[lined,boxed]{algorithm2e}
\usepackage{caption}
\usepackage[utf8]{inputenc}
\usepackage{datatool}
\usepackage[xindy]{glossaries}
\usepackage[noblocks]{authblk}

\newcommand{\eat}[1]{}

\let\epsilon=\varepsilon
\newcommand{\E}{\mathrm{E}}
\newcommand{\dist}{\text{dist}}

\newtheorem{theorem}{Theorem}
\newtheorem{fact}[theorem]{Fact}

\newtheorem{lem}[theorem]{Lemma}

\newtheorem{corollary}[theorem]{Corollary}

\allowdisplaybreaks

\title{Voronoi Choice Games}

\author[1]{Meena Boppana}
\author[1]{Rani Hod}
\author[1]{Michael Mitzenmacher}
\author[1]{Tom Morgan}
\affil[1]{Harvard University\thanks{Meena Boppana was supported in part by a PRISE summer research fellowship.
Rani Hod was supported by the Center of Mathematical Sciences and Applications at Harvard University.
Michael Mitzenmacher was supported in part by NSF grants CCF-1320321, CNS-1228598, IIS-0964473, and CCF-0915922;  part of his work was done while visiting Microsoft Research, New England.
Tom Morgan was supported in part by NSF grants  CCF-1320231 and CCF-0915922. }}

\date{}

\begin{document}

\maketitle

\begin{abstract}
We study novel variations of Voronoi games and associated random
processes that we call {\em Voronoi choice games}.  These games
provide a rich framework for studying questions regarding the power of
small numbers of choices in multi-player, competitive scenarios, and they
further lead to many interesting, non-trivial random processes that
appear worthy of study.  

As an example of the type of problem we study, suppose a group of $n$
miners (or players) are staking land claims through the following
process: each miner has $m$ associated points independently and
uniformly distributed on an underlying space (such as the the unit
circle, the unit square, or the unit torus), so the $k$th miner will
have associated points $p_{k1},p_{k2},\ldots,p_{km}$.  We generally
here think of $m$ as being a small constant, such as 2.  Each miner
chooses one of these points as the base point for their claim.  Each
miner obtains mining rights for the area of the square that is closest
to their chosen base; that is, they obtain the Voronoi cell
corresponding to their chosen point in the Voronoi diagram of the $n$
chosen points.  Each player's goal is simply to maximize the amount of
land under their control.  What can we say about the players' strategy
and the equilibria of such games?

In our main result, we derive bounds on the expected number of pure
Nash equilibria for a variation of the 1-dimensional game on the
circle where a player owns the arc starting from their point and
moving clockwise to the next point.  This result uses interesting
properties of random arc lengths on circles, and demonstrates the
challenges in analyzing these kinds of problems.  We also provide
several other related results.  In particular, for the 1-dimensional
game on the circle, we show that a pure Nash equilibrium always exists
when each player owns the part of the circle nearest to their point,
but it is NP-hard to determine whether a pure Nash equilibrium exists
in the variant when each player owns the arc starting from their point
clockwise to the next point.  This last result, in part, motivates our
examination of the random setting.

\end{abstract}

\section{Introduction}
Consider the following prototypical problem: a group of miners are
staking land claims.  The $k$th miner---or player---has $m$
associated points $p_{k1},p_{k2},\ldots,p_{km}$ in the unit torus
(which is the unit square with wraparound at the boundaries, providing
symmetry).  Each miner via some process will choose exactly one of
their $m$ points as the base for their claim.  The resulting $n$
points yield a Voronoi diagram, and each miner obtains their
corresponding Voronoi cell.  Each player's goal is simply to maximize
the amount of land under their control.  We wish to study player
behavior in this and similar games, focusing on equilibria.

As another application, political candidates can often be mapped
according to their political views into a small-dimensional space;
e.g., American candidates are often viewed as being points in a
two-dimensional space, measuring how liberal/conservative they are on
economic issues in one dimension and social issues on the other.
Suppose parties must choose a candidate simultaneously, and their
probability of winning is increasing in the area of the political space
closest to their point.  Again, the goal in this case is to maximize the
corresponding area in a Voronoi diagram.  

There are numerous variations one can construct from this setting.
Most naturally, if the players are (lazy) security guards instead of
miners, who have to patrol the area closest to their chosen base,
their goal might be to minimize the area under their purview.  Other
alternatives stem from variations such as whether player choices are
simultaneous or sequential, how the points for players are chosen, the
underlying metric space, the type of equilibrium sought, and the
utility function used to evaluate the final outcome.

However, the variations share the following fundamental features.  
There are $n$ players, with the $k$th player having $m_k$ associated
points in some metric space.  (We will focus on $m_k = m$ for a fixed $m$ for all players.)
Each player will have to choose to adopt one of their available
points.  A Voronoi diagram is then constructed, and each player is
then associated with the corresponding area in the diagram.  We
refer to this general setting as {\em Voronoi choice games}.  
We discuss below how Voronoi choice games differ from similar recent
work, but the key point is in the problems we study different players have
different available choices;  this asymmetry creates new problems and requires
distinct methods.  

We are particularly interested in the setting where each player's
points are chosen uniformly at random from the underlying space.
While uniform random points are not motivated by practice, the
framework leads to an interesting and, from the standpoint of
probabilistic analysis and geometry, very natural class of games.
Our work suggests many potential connections, to work on Voronoi diagrams
for random point sets, and to work on balanced allocations (or ``the power
of two choices''), where choice is used to improve load balancing.  
Moreover, looking at the setting of uniform random points gives us 
the opportunity to understand the nature of these games at a high level;
specifically, do most instances have no pure Nash equilibrium, or could they have
exponentially many possible pure Nash equilibria?

In general, however, we find that results for these types of problems
seem very challenging.  In our main result, we limit ourselves to the
setting where each player has $m$ associated points chosen uniformly
at random from the unit circle, and each player owns the arc starting
from their point clockwise to the next point -- that is, the distance
is unidirectional around the circle.  We derive bounds on the expected
number of pure Nash equilibria.  Even in this simple setting, our
result is quite technical, requiring a careful analysis based on
interesting properties of distributions of random arcs on a circle.
This appears, however, to be the ``easiest'' interesting version of the problem;
currently, higher-dimensional Voronoi diagrams are beyond our reach. 
However, our work suggests that further results are likely to involve interesting mathematics.

The random case of this specific version of the problem is also motivated 
by the following results.  We show it is NP-hard to determine whether a pure Nash
equilibrium exists when a player owns the arc starting from their
point clockwise to the next point for $m \geq 4$, nearly resolving the worst
case.  Further, for the different setting when a player owns an arc of the
circle corresponding to the standard Voronoi diagrams, that is a player
own all points nearest to their point, we show a Nash equilibrium always
exists (as long as all the possible choices for the players are distinct).  

While other similar Voronoi game models have been introduced previously, our primary
novelty is to introduce this natural type of asymmetric ``choice'' into these types of games.
We believe this addition provides a rich framework with many interesting combinatorial,
geometric, and game theoretic problems, as we describe throughout the
paper.  As such, we leave many natural open questions.

\subsection{Related Work}

The classical foundations for problems of this type can be found in
the work by Hotelling \cite{hotelling1929stability}, who studied the
setting of two vendors who had to determine where to place their
businesses along a line, corresponding to the main street in a town,
with the assumption of uniformly distributed customers who would walk
to the nearer vendor.  Hotelling games have been considered for
example in work on regret minimization and the price of anarchy, where
the model studied players choosing points on a general graph instead of
on the line as in the original model \cite{blum2008regret}.  Recent work 
has also shown that for a Hotelling game on a given graph, once there 
are sufficiently many players a pure Nash equilibrium always exists \cite{fournier2014hotelling}.
A useful survey on economic location-based models is provided by Gabszewicz and Thisse \cite{GT}.

Other variations of Voronoi games have appeared in the literature.
More recent work refers to these generally as {\em competitive
location games}; see for
example \cite{durr2007nash,mavronicolas2008voronoi,teramoto2006voronoi,kiyomi2011voronoi},
which discuss Voronoi games on graphs, for additional references.

Our setting appears different from previous work, in that it focuses
on players with {\em limited} sets of choices that {\em vary} among the
players.  Our starting point was aiming to build connections between
Voronoi games and random processes based on ``the power of
two choices'' \cite{azar1999balanced,richa2000power,byers2003simple,achlioptas2009explosive}.
While in our games, each player has a limited (typically constant) set
of distinct points to choose from, in previous work generally {\em all
players} could choose from {\em any point} in the universe of possible
choices.  In economic terms, in relation to the Hotelling model, our
work models that different businesses may have available a limited 
number of differing locations where they may establish
their business.  For example, businesses may have optioned the right
to set up a franchise at specific locations in advance, and must then
choose which location to actually build.  While they could know the
options available to other competing franchises, they may have to
decide where to build without knowing the choices made by competitors.
In other situations, it may be possible for franchises to move (at some
cost) to an alternative location.  We emphasize that our model is 
very different than previously studied symmetric versions of the game;
we do not recover earlier results, and earlier results do not appear
to apply once asymmetry is introduced.

\subsection{Models}
Before beginning, we explain the general class of games we are interested in.
We refer to the following as the {\em $k$-D Simultaneous Voronoi Game}.  
\begin{itemize}
\item  Each of the $n$ players has $m$ associated points from the $k$-dimensional 
unit torus $[0,1]^k$.  We assume that all players know about all of the possible
points that can be chosen by every player (it is a game of complete information).
\item  The $n$ players must simultaneously choose one of their $m$ associated points.  
\item  A Voronoi diagram is constructed for the $n$ chosen points, and each player receives utility equal to the volume of its point's Voronoi cell
in the maximization variation of the game.  (In the minimization version, the utility could be the negation of the corresponding volume.)  
\end{itemize}
We note that in the Appendix we describe {\em sequential} as opposed to {\em simultaneous}
variations of these problems.  In what follows, we consider only simultaneous
versions, and drop the word where the meaning is understood.

The easiest version to think about is the 1-D version; each player
chooses from $m$ points on the unit circle, and after their choice
they own an arc of the circle corresponding to all points closest to
their chosen point.  If each player tries to maximize their arc
length, then the utility of a player is the length of their arc.  (Or, if
each player tries to minimize their arc length, the
negation of the arc length is the utility.)  On the unit circle, there
is another variant that we refer to as the {\em One Way 1-D
Simultaneous Voronoi Game}, in which a player owns the arc starting
from their point and continuing in a clockwise direction until the
next chosen point.  Such a variation is quite natural in one dimension;  it
corresponds to assigning a ``direction'' to the unit circle.  This variation is chiefly motivated by our connections to the power-of-two choices.  In particular, it resembles the distributed hashing scheme of \cite{byers2003simple} in which peers correspond to points on a circle and keys are mapped to the closest peer in one direction along the circle.

Our contributions include highlighting differences
between the 1-D problem and the One Way 1-D problem, showing that in
this case a small difference in the model subtlety leads to large
differences in the behavior with respect to equilibria.  Indeed, as we
explain, we believe the One Way 1-D problem potentially offers more
insight into the behavior of the $k$-D Simultaneous Voronoi Game for $k\geq 2$
with respect to pure Nash equilibria.  

We focus on analyzing the equilibria of these games.  The most common
equilibrium to study is the Nash equilibrium \cite{nash1951non}, in
which each player has a random distribution on strategies such that no
player can improve their expected utility by changing their
distribution.  While Nash's results imply the Voronoi games
above all have Nash equilibria, we do not determine the complexity of
finding Nash equilibria for these games; this is left as an open
question.  We here focus on pure Nash equilibrium.  
A pure Nash equilibrium is a Nash equilibrium in which each player's
distribution has a support of size one.  In other words, each player
picks a single strategy to play and, given the other players'
strategies, no player can improve their utility by choosing a
different strategy.  Unlike the Nash and correlated equilibria, a pure
Nash equilibrium is not guaranteed to exist.

Pure Nash equilibria can be viewed as a setting where each player
can choose to switch to any of their adopted points at any time.  
The question is then what are the stable
states, where no player individually has the incentive to switch their
adopted point.  These stable states correspond to pure Nash
equilibria, and may not even exist.  A natural question is whether
simple local dynamics---such as myopic best response, where at 
each time step some subset of players decides whether or not to switch the point
it has adopted---reach a stable state quickly.
To motivate our study of the random case, we examine the computational complexity of
determining the existence of stable states in the 
1-D Simultaneous Voronoi Game and One Way 1-D
Simultaneous Voronoi Game.  For the former, we show (making use
of known techniques) that a pure Nash equilibrium always exists;  
for the latter, we show that determining whether a pure Nash equilibrium
exists is NP-complete.  

We then consider the existence of a pure Nash equilibrium for the
Randomized One Way 1-D Simultaneous Voronoi Game, where each players
possible choices for points are selected uniformly at random from the
unit circle.  Here we bound the expected number of pure Nash
equilibria, through a careful analysis based on properties of
distributions of random arcs on a circle.

We also note that we have some results for another type of
equilibrium, known as the correlated equilibrium.  Whereas Nash
equilibria have the players independently choosing their strategies, a
correlated equilibrium allows the players' random distributions to be
correlated (for example, by an external party).  The stability
requirement is then that given knowledge only of the overall
distribution of outcomes and their own randomly chosen strategy, a
player cannot improve their expected utility by deviating from their
given strategy distribution \cite{aumann1974subjectivity}.  Since a
Nash equilibrium is a special case of a correlated equilibrium, a
correlated equilibrium for the above games must exist.  We discuss the
computational complexity of finding a correlated equilibrium for the
$k$-D Simultaneous Voronoi Game in Section~\ref{correlated}.

We provide additional results in the appendices, including an
empirical investigation of the probability that myopic best response
will find a stable state in the Randomized One Way 1-D Simultaneous
Voronoi Game and Randomized $2$-D Simultaneous Voronoi Game, and
several related conjectures related to the Randomized One Way 1-D
Simultaneous Voronoi Game.

\section{Correlated Equilibria}

\label{correlated}
Our goal in this section is to show that, for the $k$-D Simultaneous Voronoi Game, correlated
equilibria can be found in polynomial time.  We present the proof for $m=2$, 
but our results easily extend for $m>2$.  We present the results for $k = 1,2,$ and 3.  
The results appear to extend to higher dimensions but the geometric details are technical;  we
note the time required to determine the correlated equilibrium appears to grow as $n^{O(k)}$.  
The results also apply to the One Way 1-D Simultaneous Voronoi Game.
\begin{theorem}
\label{thm:ceq}
For $k=1,2,$ and 3, and for a fixed $m$, there is a polynomial time algorithm for finding a correlated equilibrium in the $k$-D Simultaneous Voronoi Game.
\end{theorem}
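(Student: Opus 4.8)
The plan is to invoke the classical result that correlated equilibria of any finite game form a polytope defined by polynomially many linear inequalities (in the number of pure strategy profiles), so they can be found via linear programming --- provided we can (i) enumerate the relevant strategy profiles compactly and (ii) compute each player's utility (a Voronoi cell volume) for a given profile in polynomial time. Since each player has only $m=O(1)$ choices, the number of pure profiles is $m^n$, which is \emph{not} polynomial, so the naive LP is too large. The key observation that rescues this is that a player's utility depends only on \emph{local} geometric information: in the $k$-D torus, the volume of a point's Voronoi cell is determined by that point together with its Voronoi neighbors, and the correlated-equilibrium incentive constraints can be written using marginal distributions over small sub-configurations rather than full profiles. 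I would therefore set up the LP over variables indexed by these bounded-size local configurations, matching the $n^{O(k)}$ running time the authors mention.

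The steps, in order. First, recall/state the Hart--Schmeidler-style LP formulation: a correlated equilibrium is a distribution $\mu$ over pure profiles such that for every player $k$ and every pair of strategies $a, a'$ of player $k$, $\sum_{s: s_k = a} \mu(s)\bigl(u_k(s) - u_k(a', s_{-k})\bigr) \ge 0$, together with nonnegativity and normalization. Second, argue that because each $u_k$ is a local function --- in $1$-D, the arc length (or Voronoi arc) depends only on the chosen point and its immediate clockwise/counterclockwise neighbors among the $n$ chosen points; in $2$-D and $3$-D, the cell volume depends on the Delaunay neighbors --- the incentive constraint for player $k$ only involves the joint distribution of a bounded number of players' choices near player $k$. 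Third, re-parameterize: introduce LP variables for the marginals of $\mu$ on each such local neighborhood, impose consistency (projection) constraints between overlapping neighborhoods, and observe there are only $n^{O(k)}$ such neighborhoods each with $m^{O(1)}$ configurations, so the LP has polynomial size. Fourth, show feasibility (a correlated equilibrium exists since Nash equilibria do, by Nash's theorem applied to this finite game) so the LP is nonempty, and show that any solution of the reduced LP can be extended to (or already encodes enough of) an actual correlated-equilibrium distribution --- e.g.\ via a maximum-entropy or local-marginal coupling argument, or simply by noting that the reduced constraints are exactly the projections of the full ones.

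The main obstacle I expect is the third step: proving that consistency among local marginals suffices, i.e.\ that the polytope of locally-consistent marginal families projects \emph{onto} (not just into) the true correlated-equilibrium polytope. In general, local marginal consistency does not guarantee a global joint distribution (this is the hardness of the marginal/membership problem for general hypergraphs). The way out here is the geometric structure: which player is whose Voronoi neighbor is itself determined by the profile, so one must be careful to condition on the combinatorial type of the diagram. I would handle this by fixing, for each profile, its Delaunay adjacency structure, writing the constraints per combinatorial type, and then either (a) arguing the relevant hypergraph of neighborhoods is acyclic / has bounded treewidth in the $1$-D case (where it is literally a cycle, so local consistency \emph{does} extend globally), and (b) for $k = 2, 3$ using that the Delaunay graph of points on the torus is planar (resp.\ has bounded local complexity) to get an analogous extension, at the cost of the $n^{O(k)}$ blowup. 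The bulk of the geometric detail --- bounding neighborhood sizes and verifying the volume formulas are genuinely local --- is what pushes the exponent up with $k$ and is exactly where the authors say ``the geometric details are technical.''
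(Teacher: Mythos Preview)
Your proposal has a genuine gap at the locality step, and it is exactly the step you flagged as the main obstacle. The claim that a player's utility ``depends only on local geometric information'' is true \emph{once the profile is fixed}, but the identity of a point's Voronoi neighbors is itself a global function of the profile: with $n$ players each choosing among $m$ points, any of the other $n-1$ players' points can end up adjacent (in 1-D) or a Delaunay neighbor (in 2-D/3-D), depending on everyone else's choices. So there is no fixed bounded-size neighborhood over which to take marginals; the relevant hypergraph of ``who might neighbor whom'' is the complete graph on $n$ players, and your reduced LP either has exponentially many variables (one per combinatorial type of the diagram) or loses the information needed to write the incentive constraints. Your proposed fixes---acyclicity of the cycle in 1-D, planarity of the Delaunay graph in 2-D---apply to the \emph{realized} adjacency structure of a single profile, not to the set of \emph{potential} adjacencies across all profiles, which is what the LP variables would have to index. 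The local-consistency-to-global-distribution step therefore does not go through.

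The paper avoids this entirely. Rather than shrinking the LP, it invokes the ellipsoid-based framework of Papadimitriou--Roughgarden and Jiang--Leyton-Brown: for any succinct game, a correlated equilibrium can be found in polynomial time provided one can compute, in polynomial time, a player's \emph{expected} utility under an arbitrary \emph{product} distribution over strategies (the ``polynomial expectation property''). The technical content of the proof is then purely this expectation computation. In 1-D, sort the $2n$ points and sweep clockwise/counterclockwise from $p$, accumulating the probability that each point is the first chosen neighbor; the sweep terminates once both of some player's options have been seen, giving $O(n\log n)$. In 2-D, decompose the plane around $p$ into $O(n^2)$ angular wedges bounded by consecutive possible Voronoi vertices; within each wedge the boundary is determined by a single perpendicular bisector, and one sorts the $O(n)$ candidate bisectors by distance and sweeps as in 1-D, giving $O(n^3\log n)$. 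The 3-D case partitions space into pyramids via projected intersection lines of boundary planes, for $O(n^5\log n)$. The $n^{O(k)}$ growth comes from the number of regions in this geometric decomposition, not from any neighborhood-size bound on the game's dependency structure.
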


We appeal to \cite{coreq} and \cite{jiang2013polynomial}, who present
polynomial time algorithms for finding a correlated equilibrium of
games \emph{polynomial type}.  (The running times for these algorithms are not specifically presented in the papers
and appear rather large, but are still polynomial.)
A game of polynomial type is one that can be represented in
polynomial space such that given each player's strategy, their
utilities can be computed in polynomial time.  The $k$-D Simultaneous
Voronoi Game is of polynomial type because it can be represented in $O(nmk)$
space by a list of each players point choices and, given the players'
strategies, the utilities can be found by computing the Voronoi
diagram of the chosen points.

\label{thm:jiang}
\begin{theorem}[Theorem 4.5, \cite{jiang2013polynomial}]
Given a game of polynomial type and a polynomial time algorithm for computing the expected utility of a player under any product distribution on strategies, there exists a polynomial time algorithm for finding a correlated equilibrium in that game.
\end{theorem}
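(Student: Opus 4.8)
The statement is the general black-box reduction of Jiang and Leyton-Brown, so the plan is to prove it from scratch via the \emph{Ellipsoid Against Hope} method of Papadimitriou and Roughgarden. The starting point is that a correlated equilibrium is exactly a feasible point of a linear program: writing $x(s)$ for the probability mass placed on joint profile $s\in S=\prod_i S_i$, we require $x\ge 0$, $\sum_s x(s)=1$, and for every player $i$ and every ordered pair $(j,k)$ of strategies in $S_i$ the incentive constraint
\begin{equation*}
\sum_{s_{-i}} x(j,s_{-i})\bigl(u_i(k,s_{-i})-u_i(j,s_{-i})\bigr)\le 0.
\end{equation*}
This LP has only polynomially many constraints (one per triple $(i,j,k)$, plus normalization) but exponentially many variables, one per profile $s$. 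The polynomial-type assumption guarantees each constraint coefficient is well defined, and the classical existence of a correlated equilibrium guarantees feasibility; the entire difficulty is to \emph{produce} a feasible point of polynomial description in polynomial time.

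First I would pass to the dual. By Farkas' lemma, feasibility of the primal is equivalent to infeasibility of a dual system in the polynomially many variables $y_{i,j,k}\ge 0$ but with one constraint per profile $s$ — exponentially many constraints over polynomially many variables, which is exactly the regime where the ellipsoid method with a separation oracle applies. So I would run the ellipsoid method on this dual feasibility problem. Given a candidate multiplier vector $y$, the separation oracle must either certify that $y$ is a genuine infeasibility certificate or exhibit a violated dual constraint, that is, a distribution over profiles on which the $y$-weighted regret is nonnegative.

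The heart of the proof is the \emph{product-distribution witness}. For fixed $y\ge 0$, interpret, for each player $i$, the numbers $(y_{i,j,k})_{j,k}$ as the off-diagonal rates of a Markov chain on $S_i$, and let $p_i$ be a stationary distribution of that chain (computable in polynomial time by linear algebra). Set $p=\prod_i p_i$. The key algebraic identity is that
\begin{equation*}
\sum_{s} p(s)\,\sum_i\sum_{k\in S_i} y_{i,s_i,k}\bigl(u_i(k,s_{-i})-u_i(s_i,s_{-i})\bigr)=0,
\end{equation*}
because the stationarity condition $\sum_j p_i(j)\,y_{i,j,k}=p_i(k)\sum_j y_{i,k,j}$ makes each player's contribution telescope to zero once summed over that player's own coordinate. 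Crucially, the left-hand side is an expectation of per-player utility differences under the \emph{product} distribution $p$, so it is evaluable in polynomial time using exactly the hypothesized expected-utility oracle. Hence $p$ witnesses that $y$ cannot be a valid infeasibility certificate, and the oracle returns $p$ as the separating hyperplane.

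Finally I would harvest the output. Since every candidate $y$ is cut off by such a product distribution, the ellipsoid method halts having generated only polynomially many product distributions $p^{(1)},\dots,p^{(T)}$. The sub-dual using only these $T$ columns is infeasible, so by LP duality the sub-primal is feasible: there exist convex weights $\alpha_t\ge 0$, $\sum_t\alpha_t=1$, for which $x=\sum_t \alpha_t\,p^{(t)}$ satisfies all incentive constraints, and solving this small LP in the $\alpha$ variables is polynomial. The resulting $x$ is a correlated equilibrium represented succinctly as a sparse mixture of polynomially many product distributions. I expect the main obstacle to be establishing the witness identity and wiring it correctly into the ellipsoid framework: one must verify that the stationary-distribution construction zeroes out the $y$-weighted regret for \emph{every} $y\ge 0$, so the separation oracle never fails, and that the polynomially many cutting planes produced suffice to recover a genuine primal solution. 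The expected-utility oracle enters precisely to evaluate the witness and the final column-generation LP in polynomial time, while the remaining LP-duality bookkeeping is routine.
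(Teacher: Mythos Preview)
Your proposal is correct and follows essentially the same Ellipsoid-Against-Hope approach that the paper sketches when summarizing the Jiang--Leyton-Brown result: an exponential-variable primal LP whose unbounded dual is attacked by ellipsoid, with product-distribution separating hyperplanes, and a final small LP over the polynomially many generated product distributions to recover the correlated equilibrium. You supply more detail than the paper does---in particular the Markov-chain stationary-distribution construction that certifies the witness identity---but the overall architecture is the same.
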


Jiang et al. proved Theorem~\ref{thm:jiang} by constructing a linear
program with a variable for each of the $2^n$ possible strategy
profiles.  The LP's constraints are non-negativity, and the constraints
requiring that the variables form a correlated equilibrium.  They do
not, however, enforce that the variables sum to one, or even at most
one, and rather use the sum of these variables as the objective.  Thus, since
a correlated equilibrium is guaranteed to exist by Nash's Theorem, this
LP is unbounded and its dual is infeasible.  They then run the
ellipsoid algorithm for a polynomial number of steps on the dual LP
(this takes polynomial time, since the dual LP has only polynomially
many variables).  They argue that the intermediate steps of the
ellipsoid algorithm can be used to construct product distributions of
which there is a convex combination that is a valid correlated
equilibrium, and which can be found with a second linear program.

The second linear program's separation oracle requires as a subroutine a polynomial
time algorithm for computing the expected utility of a player given a
product distribution over the strategies.  (This requirement is
referred to in \cite{coreq} and \cite{jiang2013polynomial} as the {\em polynomial
expectation property}.)  In our case, we represent this product distribution
by letting $s(p_i)$ be the probability the point $p_i$ is chosen, for $i \in \{1,\ldots,2n\}$.  For a given point $p$, which is one of two
choices for a player $i$, we use $\bar{p}$ to refer to player
$i$'s other choice, so we must have $s(p)=1-s(\bar{p})$.  Each player
independently chooses a point according to the probabilities given by $s$.  Our
work is to demonstrate polynomial time algorithms for this subroutine,
which we do in Section~\ref{oned} and Section~\ref{twod}.  We note that it is
not immediate that such an algorithm should exist, even when each
player has only $m=2$ choices, as the number of possible configurations is
$m^n$.  Hence, we cannot simply sum over all configurations when
calculating the expectation.  In \cite{coreq}
it is noted that for certain congestion games, these expectations can be computed
using dynamic programming, essentially adding one player in at a time and
updating accordingly.  Our approach is similar in spirit, but requires taking
advantage of the underlying geometry.  

\subsection{The Algorithm for the 1-D Game}
\label{oned}

The 1-D version of the problem, while simpler, provides the intuition that helps us in higher dimensions.

\begin{lem}
\label{lem:1dexpectedutil}
Computing a player's expected utility under a product distribution on strategies in the 1-D Simultaneous Voronoi Game takes $O(n \log n)$ time.
\end{lem}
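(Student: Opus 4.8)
The plan is to reduce the computation to the following subproblem: fix a player $i$ with points $p$ and $\bar{p}$, and compute $\E[\text{cell length of }p]$ under the conditional distribution in which player $i$ has chosen $p$ and the remaining $n-1$ players choose independently according to $s$. Since player $i$'s expected utility is exactly $s(p)\cdot\E[\text{cell length of }p\mid i\text{ chose }p] + s(\bar p)\cdot\E[\text{cell length of }\bar p\mid i\text{ chose }\bar p]$, it suffices to solve this subproblem twice (once for $p$, once for $\bar p$) and combine.

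The first step is to decompose the cell length. In the 1-D game, if $p$ is a chosen point and the nearest chosen points clockwise and counterclockwise from $p$ lie at arc-distances $g^+$ and $g^-$, then $p$'s Voronoi cell is precisely the arc reaching halfway to each neighbor, so its length is $\tfrac12 g^+ + \tfrac12 g^-$. By linearity of expectation it is enough to compute $\E[g^+]$ and $\E[g^-]$; I describe $\E[g^+]$, the other case being symmetric. Write $\E[g^+]=\int_0^\infty\Pr[g^+>x]\,dx$, where the integrand vanishes for $x$ past the farthest other point (then every other player has both of its points inside the clockwise arc from $p$, so some chosen point is). The key observation is that $g^+>x$ holds exactly when no chosen point lies in the clockwise arc $(p,p+x)$, and since the other $n-1$ players choose independently,
\[
\Pr[g^+>x]=\prod_{j\neq i}\Bigl(\text{probability player }j\text{ has no point in }(p,p+x)\Bigr),
\]
where the $j$-th factor is $1$ if neither of player $j$'s two points lies in the arc, is $s(q)$ if exactly one of them, $q$, lies outside the arc, and is $0$ if both lie inside the arc.

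The second step exploits that the right-hand side is a step function of $x$ whose only breakpoints are the $2n-2$ clockwise distances from $p$ to the other players' points. Sort these distances $0<x_1<\cdots<x_{2n-2}$ in $O(n\log n)$ time (ties, which occur with probability $0$ in the random setting, may be broken arbitrarily). Then sweep $x$ upward maintaining the running product $P$: on each interval $(x_\ell,x_{\ell+1})$ the arc contains exactly the $\ell$ nearest other points, so $\Pr[g^+>x]=P_\ell$ is constant there and contributes $P_\ell\,(x_{\ell+1}-x_\ell)$ to the integral. When the sweep crosses the point at distance $x_\ell$, belonging to player $j$, update $P$ by multiplying by $s$ of player $j$'s \emph{other} point if that other point has not yet been crossed, and by setting $P:=0$ if it has — once a player has both points in the arc its factor collapses to $0$ and stays $0$, so the sweep may terminate. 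A boolean array indexed by player records which players have been seen, so each step is $O(1)$ and the whole sweep is $O(n)$; with the sort this yields $\E[g^+]$, and likewise $\E[g^-]$, in $O(n\log n)$ time. Combining gives $\E[\text{cell length of }p]=\tfrac12(\E[g^+]+\E[g^-])$, and repeating for $\bar p$ and taking the $s$-weighted combination gives the player's expected utility in $O(n\log n)$ total.

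The only delicate point is the update rule at a breakpoint: recognizing that the contribution of a player is not multiplied further but instead zeroed out as soon as both of its points fall inside the arc (which is also what justifies stopping the sweep early). Everything else — the midpoint decomposition of the cell, the product formula for $\Pr[g^+>x]$, and the linear-time integration of a sorted step function — is routine bookkeeping.
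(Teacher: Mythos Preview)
Your proposal is correct and is essentially the same argument as the paper's: both decompose the cell length as $\tfrac12(\E[D_1]+\E[D_2])$, sort the other players' $2n-2$ points by clockwise distance, and sweep while maintaining the running probability $P$ that no chosen point has yet been passed, terminating (or zeroing $P$) once both points of some player lie inside the arc. The only cosmetic difference is that you phrase the computation via the tail integral $\E[g^+]=\int_0^\infty\Pr[g^+>x]\,dx=\sum_\ell P_\ell(x_{\ell+1}-x_\ell)$, whereas the paper sums $\sum_q \Pr[q\text{ is nearest chosen}]\cdot\dist(p,q)= \sum_q s(q)\,P\cdot\dist(p,q)$; since $1-s(q)=s(\bar q)$, the update to $P$ and the resulting sum are identical.
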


\begin{proof}
Computing the expected utility of a product distribution in the 1-D Simultaneous Voronoi Game is equivalent to finding the expected area of the Voronoi cell owned by a given point (conditioned on it being chosen) given the independent probabilities of each other player's points being chosen.  In the $1$-dimensional torus (a circle), for a given point $p$, let $A_p$ be the area of $p$'s Voronoi cell (assuming it is chosen), $D_1(p)$ be the distance to the first chosen point clockwise from $p$, and $D_2(p)$ be the distance to the first point chosen counterclockwise from $p$.  We have $\E[A_p]=\frac{1}{2}(\E[D_1(p)]+\E[D_2(p)])$ by linearity of expectations.

To compute $\E[D_1(p)]$, we start by sorting all the points.  (We need
only sort once for all points $p$.)  We then loop over all points $p'$
moving clockwise from $p$ -- excluding $\bar{p}$, which we know
will not be chosen if $p$ is chosen -- and compute the probability
that $p'$ is the closest chosen point clockwise from $p$.  The main
insight is that, as we go through the points, the first time we have a
point $q$ such that we have already seen $\bar{q}$, then we can
stop the computation, as one of these two points must be chosen.

Algorithm~\ref{alg1d} describes how to compute
$\E[D_1(p)]$, assuming the points are given in sorted order (clockwise
from $p$).  We track $D$, the running value that will equal
$\E[D_1(p)]$ on return, and $P$, the probability that the closest
point clockwise from $P$ is further than those examined so far.  As we
visit a point $p'$, $D$ increases by $s(p') \cdot P \cdot \dist(p,
p')$.  The value of $P$ decreases by a factor of $(1-s(p'))$, unless
$\bar{p}'$ has already been visited; then $P$ becomes 0 and we
may terminate the calculation.  Computing $\E[D_2(p)]$ is entirely
analogous.  
\end{proof}

\begin{algorithm}
{\bf Compute} $\E[D_1(p)]$
\caption{Computing a player's expected utility in 1-D.}
\label{alg1d}
\begin{algorithmic}
\STATE $D \gets 0$ // value of $\E[D_1(p)]$ computed so far
\STATE $P \gets 1$ // probability of closest point clockwise from $p$ further than seen so far
\FOR{each point $q$ except $p$ and $\bar{p}$, in ascending order of clockwise dist from $p$}
	\IF {$\bar{q}$ has been previously visited} \RETURN $D + P \cdot \dist(p, q)$ \ENDIF
	\STATE $D \gets D + s(q) \cdot P \cdot \dist(p, q)$
	\STATE $P \gets P \cdot (1 - s(q))$
\ENDFOR
\RETURN $D$
\end{algorithmic}
\end{algorithm}

Observe that Lemma~\ref{lem:1dexpectedutil} makes no special use of
the fact that the points lie in a torus, and in fact the Lemma also
applies when they lie in a line segment.  The generalization to the
case where $m > 2$ is straightforward.  One proceeds with the same
computation to find $\E[D_1(p)]$, except that one can stop and return
the value for $D$ only when one has reached all $m$ points associated
with a some player (in the case of a torus). 

\subsection{The Algorithm in Higher Dimensions}
\label{twod}

In this section we present a polynomial time
algorithm for computing the expected utility of a product distribution
over the chosen points in 2-D, and explain how it generalizes to three dimensions, and may generalize further to higher dimensions.  Intuitively, the challenge in
going beyond one dimension is that the interactions are more complicated;
there are more than just two neighboring cells to consider. 
For convenience, the algorithm we present is for the unit square;
it is readily modified for the unit torus.

\begin{lem}
\label{lem:2dexpectedutil}
Computing a player's expected utility under a product distribution on strategies in the 2-D Simultaneous Voronoi Game takes $O(n^3 \log n)$ time.
\end{lem}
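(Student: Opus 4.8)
The plan is to mimic the one-dimensional dynamic program from Lemma~\ref{lem:1dexpectedutil}, but to replace ``sweep along the circle'' with ``sweep the plane while tracking the boundary of the region still possibly owned by the target point $p$.'' Fix a point $p$ that is one of player $i$'s two choices; conditioning on $p$ being chosen (and hence on $\bar p$ not being chosen), we must compute $\E[A_p]$, the expected area of $p$'s Voronoi cell. By linearity of expectation it suffices, for each candidate competitor point $q \ne p, \bar p$, to understand the contribution of $q$ to the boundary of $p$'s cell. The key geometric fact I would use is that $p$'s Voronoi cell is the intersection of the halfplanes $H_{pq} = \{x : \dist(x,p) \le \dist(x,q)\}$ over all \emph{chosen} $q$, and that the area of this random polygon can be obtained by integrating, over the square, the probability that a given location $x$ lies in $p$'s cell --- i.e.\ the probability that \emph{every} chosen competitor $q$ has $x \in H_{pq}$. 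That per-point event is independent across players (as in the 1-D case the two points of one player are perfectly anticorrelated, which we handle by the ``stop when you've seen $\bar q$'' trick or, more simply here, by noting $s(q) = 1 - s(\bar q)$ and grouping the two points of a player into one factor).

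The main structural observation is that for a fixed target location $x$, the competitors that matter are ordered by their distance to $x$ (the \emph{nearest} chosen competitor determines whether $x \in$ cell of $p$ among all competitors closer to $x$ than $p$ is), so $\Pr[x \in \text{cell}(p)] = \prod_{q : \dist(x,q) < \dist(x,p)} \big(1 - s(q)\big)$, with the product taken over competitor points strictly closer to $x$ than $p$ is, and with the usual bookkeeping that the two points of a single player contribute a single combined factor. Thus $\E[A_p] = \int_{[0,1]^2} \prod_{q : \dist(x,q)<\dist(x,p)} (1-s(q))\, dx$. The integrand is a piecewise-constant function of $x$ whose pieces are the cells of the arrangement of the $O(n)$ perpendicular bisectors between $p$ and each competitor $q$: within one such cell, the set $\{q : \dist(x,q) < \dist(x,p)\}$ is constant, so the product is constant. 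An arrangement of $O(n)$ lines in the plane has $O(n^2)$ faces, each of which can be enumerated and whose area can be computed in $O(1)$ or $O(\log n)$ amortized time; for each face we must also know the product of $(1-s(q))$ over the relevant competitor set, which changes by a single factor as we step across one bisector to an adjacent face. A standard incremental traversal of the line arrangement (e.g.\ sweeping, or walking the dual graph) therefore lets us maintain this product in $O(\log n)$ per face, giving $O(n^2 \log n)$ per target point $p$ and $O(n^3 \log n)$ over all $2n$ target points --- matching the claimed bound. (The $\log n$ factors absorb the sorting/priority-queue costs of the sweep; the extension to $k=3$ replaces the line arrangement by an arrangement of $O(n)$ bisector \emph{planes}, which has $O(n^3)$ cells, hence the remarked $n^{O(k)}$ growth.)

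I would organize the proof as: (i) reduce to computing $\E[A_p] = \int_{[0,1]^2}\Pr[x \in \text{cell}(p)]\,dx$ by Fubini/linearity; (ii) express $\Pr[x \in \text{cell}(p)]$ as the product of independent per-\emph{player} factors, where each factor is $1$ if both of that player's points are farther from $x$ than $p$ is, $1-s(q)$ if exactly the point $q$ is closer, and $0$ if both are closer (so the player certainly captures $x$) --- and note $p$'s own player contributes nothing since $\bar p$ is conditioned out; (iii) observe the integrand is constant on each face of the arrangement $\mathcal{A}$ of the $\le 2n$ perpendicular bisectors $\{x : \dist(x,p) = \dist(x,q)\}$, clipped to the unit square; (iv) compute all face areas via a sweep-line construction of $\mathcal{A}$ in $O(n^2\log n)$ time, simultaneously maintaining the value of the product as an incremental quantity updated by one multiplicative factor per bisector crossing; (v) sum area $\times$ product over all faces to get $\E[A_p]$, then repeat for each of the $2n$ points.

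\textbf{The main obstacle} I anticipate is step (iv): being careful that the value of the product can be maintained incrementally as one walks from face to face in the arrangement. Crossing a single bisector between $p$ and some $q$ toggles exactly whether $q$ is in the ``closer than $p$'' set, which would change the product by multiplying or dividing by $(1-s(q))$; but division is problematic when $s(q)$ could be very close to $1$, and more seriously the per-player grouping means crossing one bisector can move a player from the ``$1-s(q)$'' state into the ``$0$'' state or back, so the update is not a single clean scalar multiply in all cases. The fix is to track, per player, the pair of indicator bits (is $q_1$ closer? is $q_2$ closer?) rather than only the scalar product, and to maintain the scalar product over \emph{those players currently in the $1-s$ state} together with a separate count of players currently in the $0$ state; the product of interest is then this scalar if the zero-count is $0$ and is $0$ otherwise. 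Each bisector crossing flips one bit for one player, which is an $O(1)$ update to these quantities (still no unsafe division, since we re-derive the scalar by multiplying in or, when we must remove a factor, we instead recompute lazily or keep a small balanced structure --- but since each player contributes at most $2$ relevant bisectors and we can preprocess the geometry so a player never oscillates, the total number of such events is $O(n)$ per target point and the bookkeeping stays within budget). Making this traversal-and-update argument fully rigorous, and confirming that the clipping to $[0,1]^2$ does not blow up the face count beyond $O(n^2)$, is the part that requires the most care; everything else is the routine arrangement/sweep-line machinery.
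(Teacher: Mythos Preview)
Your approach is correct and proves the lemma, but it is genuinely different from the paper's argument. The paper does \emph{not} integrate $\Pr[x\in\mathrm{cell}(p)]$ over the square. Instead it performs an \emph{angular} decomposition around $p$: it enumerates the $O(n^2)$ possible Voronoi vertices of $p$'s cell (circumcenters of $p$ with pairs of competitor points), sorts them by angle about $p$, and writes $A_p=\sum_i T_i$ where $T_i$ is the piece of the cell in the wedge $(\theta_i,\theta_{i+1})$. Each $T_i$ is a triangle with $\E[T_i]=\tfrac12\sin(\theta_{i+1}-\theta_i)\,\E[D_iD_{i+1}]$, and within a single wedge the $O(n)$ perpendicular bisectors are totally ordered by distance from $p$, so $\E[D_iD_{i+1}]$ is computed by exactly the 1-D sweep of Lemma~\ref{lem:1dexpectedutil}. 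Sorting the $O(n)$ bisectors in each of the $O(n^2)$ wedges gives the stated $O(n^3\log n)$ for one player.

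Your Fubini-plus-arrangement route is arguably cleaner: the integrand is a straight product of independent per-player factors, with no second-moment quantity like $\E[D_iD_{i+1}]$ to justify. It also yields $O(n^2\log n)$ per target point, a factor of $n$ better than what the paper actually proves for a single player; you only reach $O(n^3\log n)$ by summing over all $2n$ points, which the lemma does not ask for. Two small corrections to your bookkeeping: first, during a traversal of the arrangement you will cross bisectors $\Theta(n^2)$ times in total, not $O(n)$ as you wrote (each of the $O(n^2)$ dual edges is a crossing), though this is still within budget; second, the cleanest way to avoid the division issue entirely is to keep the per-player factors in a segment tree (product at the root), so each crossing is an $O(\log n)$ leaf update with no division and no separate zero-count. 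With that, step~(iv) is fully rigorous. The paper's approach, by contrast, avoids building or traversing any arrangement at all---it just re-sorts inside each wedge---so it is easier to implement but pays the extra factor of $n$.
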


\begin{proof}
We assume that the points are in general
position (that is, no three are collinear); this assumption is not mandatory, but simplifies the proof.
As in the one-dimensional case, we decompose the
area of the cell into a sum and then exploit linearity of
expectations.  Given our initial point set,  there are $O(n^2)$ possible 
vertices that can make up a vertex on $p$'s Voronoi cell, since each
possible vertex is the circumcenter of $p$ and two of the $2n-2$ 
points that the other players can choose from.  Choose one arbitrarily to be $v_1$, and let 
$V(p)=\{v_1,\ldots,v_t\}$ be the set of all possible vertices, in
sorted order according to the angle between $v_1$, $p$, and $v_i$
(measured in a counterclockwise direction from $v_1$).  
Let $\theta_i$ be the corresponding angle for $v_i$.

Let the random variable $A_p$ be the area of $p$'s Voronoi cell.  $A_p
= \sum_i T_i$ where $T_i$ is the area of $p$'s Voronoi cell
between angles $\theta_i$ and $\theta_{i+1}$.  Thus, $\E[A_p]
= \sum_i \E[T_i]$.  Figure~\ref{vg2d} provides an example of this
decomposition for one possible realization of a point's Voronoi cell.  Note that as depicted in this figure, a vertex may appear inside the Voronoi cell if neither of the points associated with that vertex are chosen by their corresponding players.  Observe that region $T_i$ is always a triangle since,
by definition, no other vertex of the cell creates an angle in the
range $(\theta_i, \theta_{i+1})$.  If we let 
$D_i$ be the distance from $p$ to the boundary of its Voronoi 
cell in the direction $\theta_i$, then
$$\E[T_i] = \E\left[\frac{D_i
D_{i+1} \sin(\theta_{i+1}-\theta_i)}{2}\right] = \frac{\E[D_i
D_{i+1}]\sin(\theta_{i+1}-\theta_i)}{2}.$$

\begin{figure}[h]
	\centering
		\includegraphics[width=.5\textwidth]{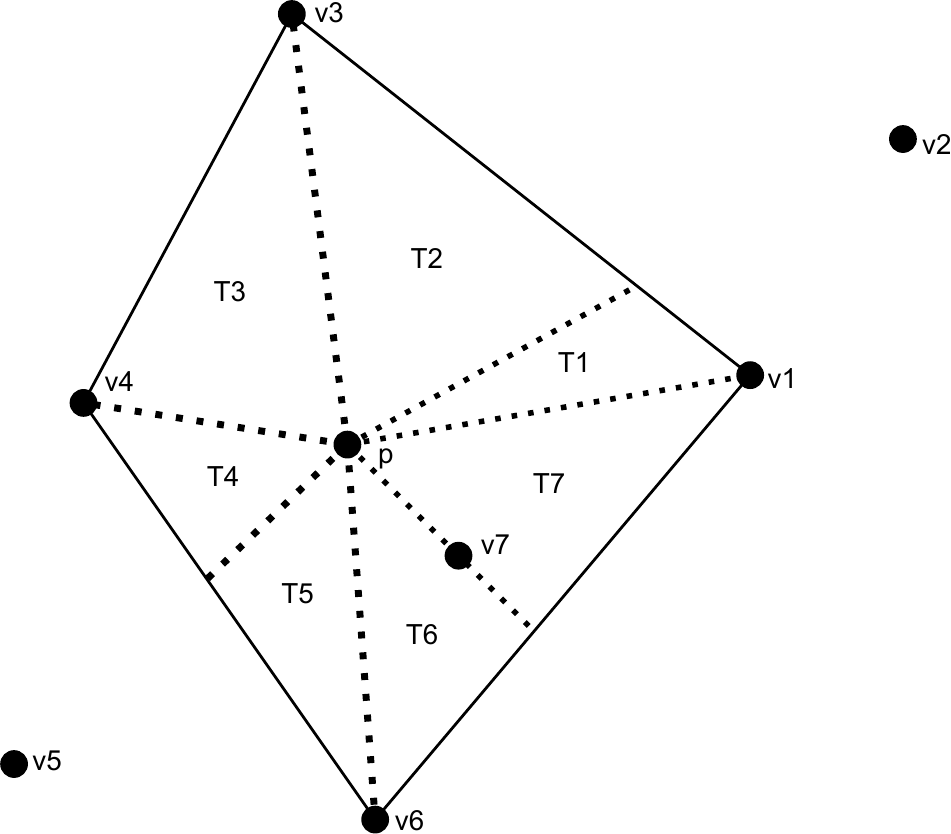}
	\caption{A point $p$ together with the possible vertices $V(p)$ and one possible Voronoi cell, decomposed into triangles $T_i$ by dotted lines.}
	\label{vg2d}
\end{figure}

We refer to the perpendicular bisector of the line $(p,q)$ as
the \emph{boundary line} between $p$ and $q$.  If $p$
and $q$ own neighboring Voronoi cells, this line separates
them.  The point $p$ has at most $2n$ possible boundary lines (one per
possible other point);  we say one of $p$'s boundary lines
is \emph{chosen} in the event that the corresponding $q$ is chosen by
its player.  Observe that the values of $D_i$ and $D_{i+1}$ are
uniquely determined by the chosen boundary line closest to $p$ within
the $(\theta_i, \theta_{i+1})$ region.  The notion of the closest
boundary line is unambiguous within $(\theta_i, \theta_{i+1})$ since,
by definition, none of the possible boundary lines intersect in
$(\theta_i, \theta_{i+1})$.

Algorithm~\ref{alg2d} provides an algorithm for computing $\E[D_i
D_{i+1}]$ given a product distribution on the points.  This algorithm
first sorts the possible boundary lines in increasing order of distance from
$p$ within $(\theta_i, \theta_{i+1})$.  This can be done by sorting them by their position along the ray $(p, (\theta_i+\theta_{i+1})/2)$.  The algorithm then iterates through the boundary lines
computing the probability that each one is the closest chosen boundary
line to $p$.  
The value  $D(p,q,i)$ in the algorithm is the distance from $p$ to the boundary line between $p$ and $q$ along angle $\theta_i$.
The running time is dominated by sorting the boundary
lines which takes $O(n \log n)$ time.  Thus, the total running time to
compute $\E[A_p]$ is $O(n^3 \log n)$. 
\end{proof}

\begin{algorithm}
{\bf Compute} $\E[D_i D_{i+1}]$
\caption{Computing part of a player's expected utility in 2-D.}
\label{alg2d}
\begin{algorithmic}
\STATE $D \gets 0$ // value of $\E[D_i D_{i+1}]$ computed so far
\STATE $P \gets 1$ // probability of closest boundary line to $p$ in this arc being further \\ than seen so far
\FOR{each point $q$ except $p$ and $\bar{p}$ whose boundary line with $p$ intersects \\ $(\theta_i, \theta_{i+1})$, in increasing order of the boundary line's distance to $p$}
	\IF {$\bar{q}$ has been previously visited} \RETURN $D + P \cdot D(p,q,i) \cdot D(p,q,i+1)$ \ENDIF
	\STATE $D \gets D + s(q) \cdot P \cdot D(p,q,i) \cdot D(p,q,i+1)$
	\STATE $P \gets P \cdot (1 - s(q))$
\ENDFOR
\RETURN $D$
\end{algorithmic}
\end{algorithm}

\begin{corollary} \label{cor:vg3d}
Computing a player's expected utility under a product distribution on strategies in the 3-D Simultaneous Voronoi Game takes $O(n^5 \log n)$ time.
\end{corollary}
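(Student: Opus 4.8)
The plan is to follow the proof of Lemma~\ref{lem:2dexpectedutil} almost verbatim, replacing the one–parameter family of angular sectors around $p$ by a two–parameter family of \emph{solid–angle} regions, and then to reuse Algorithm~\ref{alg2d} inside each region. Fix the point $p$. As in the 2-D case, call the perpendicular bisector of the segment $(p,q)$ the boundary plane between $p$ and $q$; there are at most $2n$ of them (plus the constantly many faces of the unit cube, which we treat as always–present boundary planes exactly as the square's edges were handled in the 2-D argument). Each \emph{chosen} boundary plane cuts off a halfspace containing $p$, and $p$'s Voronoi cell is the intersection of these halfspaces; projecting radially from $p$ onto the unit sphere of directions, the cell is described by a single radial distance function, and the goal is to partition the direction sphere into regions on which the combinatorial description of the cell is fixed.

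The right partition is the arrangement of great circles induced by the possible boundary planes. Two possible boundary planes meet in a line, and the directions from $p$ toward that line form an arc of a great circle; there are $O(n^2)$ pairs of possible boundary planes, hence $O(n^2)$ such great circles. (Observe that the $O(n^3)$ possible Voronoi vertices of $p$'s cell -- circumcenters of $p$ and three other points -- all lie on these great circles, in the same way the angles $\theta_i$ in the 2-D proof were the directions toward the $O(n^2)$ possible vertices.) Let $\mathcal{A}$ be the arrangement of these $O(n^2)$ great circles on the sphere. By Euler's formula an arrangement of $k$ great circles has $O(k^2)$ vertices, edges, and faces, so $\mathcal{A}$ has $O(n^4)$ faces; fan–triangulating each face multiplies the count by only a constant, giving a covering of the sphere by $O(n^4)$ spherical triangles $\sigma_1,\dots,\sigma_N$ with disjoint interiors, each contained in a single face of $\mathcal{A}$.

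Now decompose the volume. For a spherical triangle $\sigma$ with extreme unit directions $u_1,u_2,u_3$, let $C(\sigma)$ be the simplicial cone with apex $p$ over $\sigma$ and let $A_p(\sigma)$ be the volume of $p$'s Voronoi cell inside $C(\sigma)$; then $A_p=\sum_j A_p(\sigma_j)$, so by linearity $\E[A_p]=\sum_j \E[A_p(\sigma_j)]$. Because $\sigma$ lies in one face of $\mathcal{A}$, no two possible boundary planes swap their distance order anywhere inside $C(\sigma)$; hence for any realization of the chosen planes a single chosen plane -- the one nearest $p$ within $C(\sigma)$, say the plane of $q$ -- is nearest throughout $\sigma$, and $p$'s cell inside $C(\sigma)$ is exactly the tetrahedron that this plane cuts off $C(\sigma)$. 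Writing $D(p,q,j)$ for the distance from $p$ to the plane of $q$ along $u_j$, that tetrahedron has volume $\tfrac{1}{6}\,\lvert\det(u_1,u_2,u_3)\rvert\,D(p,q,1)D(p,q,2)D(p,q,3)$, so $\E[A_p(\sigma)]=\tfrac{1}{6}\lvert\det(u_1,u_2,u_3)\rvert\cdot\E\bigl[D(p,q,1)D(p,q,2)D(p,q,3)\bigr]$ with $q$ the random nearest chosen plane in $C(\sigma)$. This last expectation is computed by the same sweep as Algorithm~\ref{alg2d}: sort the at most $2n$ possible boundary planes meeting $C(\sigma)$ by distance to $p$ inside $C(\sigma)$, iterate maintaining the running probability $P$ that the nearest chosen plane is beyond those seen, add $s(q)\cdot P\cdot D(p,q,1)D(p,q,2)D(p,q,3)$ at each step, multiply $P$ by $1-s(q)$, and stop as soon as a $q$ is reached whose partner $\bar q$ has already been visited. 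Correctness is identical to the 2-D argument. Each region costs $O(n\log n)$ (dominated by the sort), there are $O(n^4)$ regions, and building and triangulating $\mathcal{A}$ costs $O(n^4\log n)$, for a total of $O(n^5\log n)$; the adaptation to the unit torus is routine.

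The step I expect to require the most care is the geometric decomposition: verifying that the arrangement of those $O(n^2)$ great circles is exactly the refinement needed so that the "nearest chosen plane" is constant on each spherical triangle and the clipped cell is a genuine tetrahedron, and carefully bounding the combinatorial complexity of the arrangement (and of its triangulation) by $O(n^4)$. The probabilistic sweep and its two–points–of–a–player stopping rule, and the treatment of the bounded domain, transfer unchanged from Algorithm~\ref{alg2d}.
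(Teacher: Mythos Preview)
Your proposal is correct and takes essentially the same approach as the paper: compute the $O(n^2)$ pairwise intersection lines of the possible boundary planes, project them radially onto a sphere about $p$ to obtain an arrangement with $O(n^4)$ faces, and inside each resulting cone run the same nearest-chosen-plane sweep as in Algorithm~\ref{alg2d}. Your additions (fan-triangulating faces so that each cone is simplicial and the clipped volume is a tetrahedron with an explicit determinant formula) are refinements rather than a different route; the paper simply says ``multiply these probabilities by the volumes of the polyhedra induced by the corresponding boundary planes'' without spelling out that step.
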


\begin{proof}
The approach of Lemma~\ref{lem:2dexpectedutil} generalizes naturally to three dimensions.  The idea is to partition the space into pyramids radiating out from $p$, in which none of the $O(n)$ possible boundary planes intersect.  Given this partitioning, we can again use linearity of expectations by computing the expected volume of $p$'s Voronoi cell in each pyramid separately.  Within a given pyramid, we sort the boundary planes by their distance to $p$, and then iterate through them computing the probability that each one is the closest chosen boundary plane.  We then multiply these probabilities by the volumes of the polyhedra induced by the corresponding boundary planes, and sum the results to get the expected volume of the pyramid.

Identifying the space partition in three dimensions is trickier than it is in two.  Here, the pyramids can be found by computing the lines of intersection between all of the possible boundary planes, projecting them onto an infinitesimal sphere centered at $p$, and then finding all of the faces induced by these lines on the sphere.  Each face on the sphere induces a pyramid in our partition by extending out from $p$ through the face.  Identifying these faces can be done by constructing the graph formed by the lines and their intersections and then proceeding analogously to finding the faces of a planar graph as in \cite{nishizeki1988planar}.  This process of identifying the space partition is depicted in Figure~\ref{fig:vg3d}. This takes time linear in the number of edges, faces and vertices.

\begin{figure}[!ht]
    \subfloat[Compute all the intersection lines between the possible boundary planes.]{%
      \includegraphics[width=0.3\textwidth]{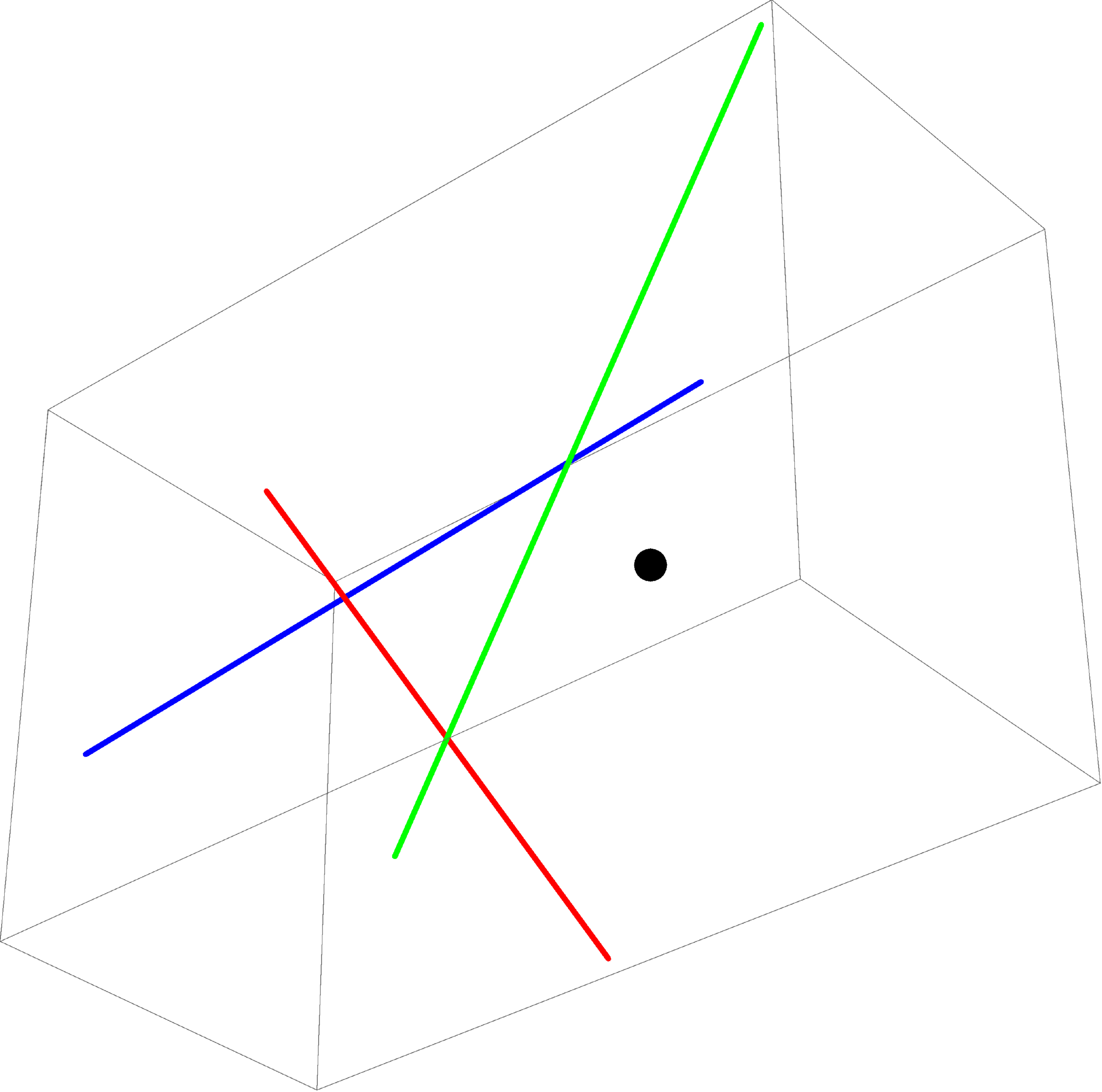}
    }
    \hfill
    \subfloat[Project the lines of intersection onto a sphere centered at our point.]{%
      \includegraphics[width=0.3\textwidth]{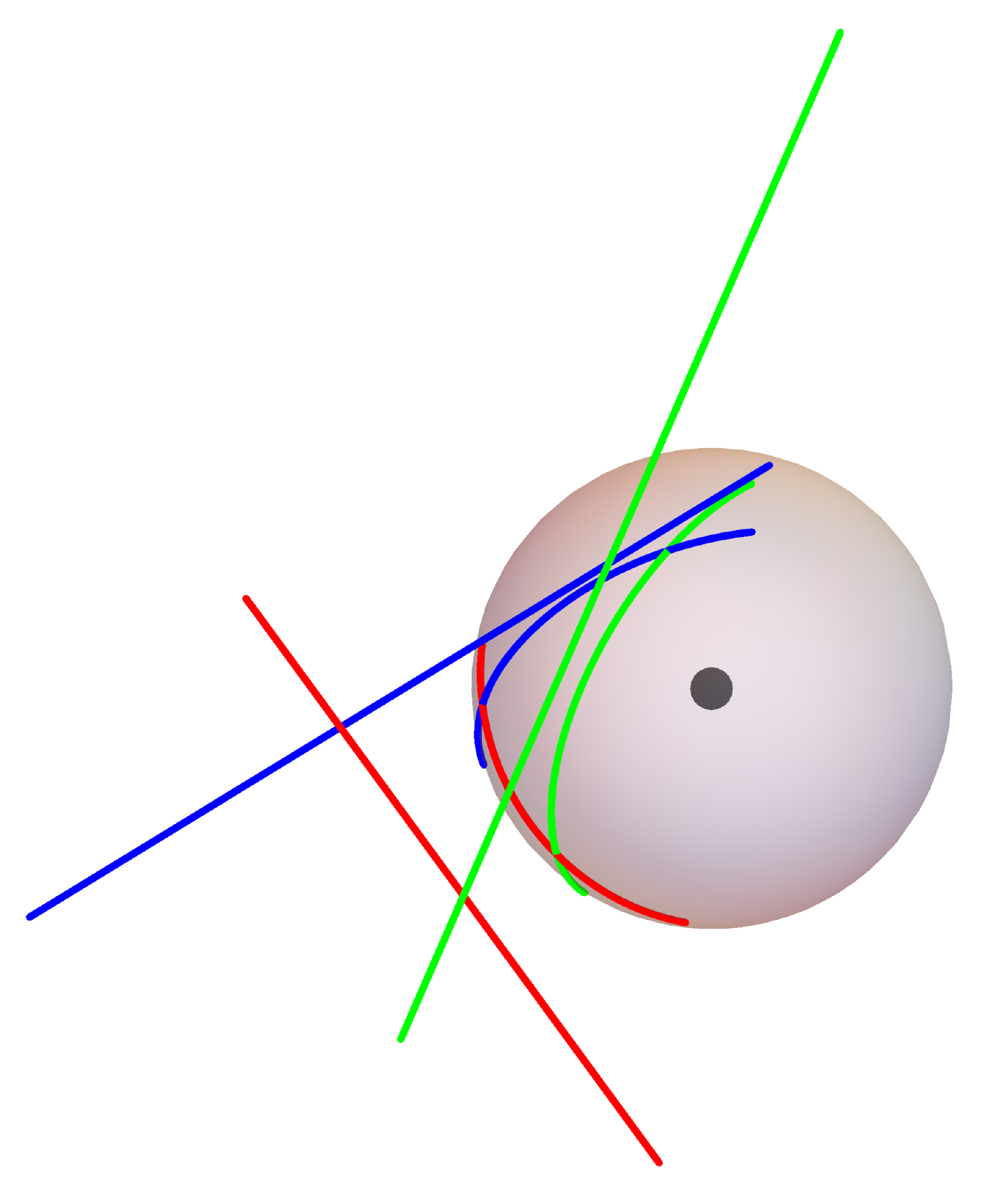}
    }
    \hfill
    \subfloat[Each region is formed by extending a pyramid from our point out through one of the faces of the graph induced by the lines on the sphere.]{%
      \includegraphics[width=0.3\textwidth]{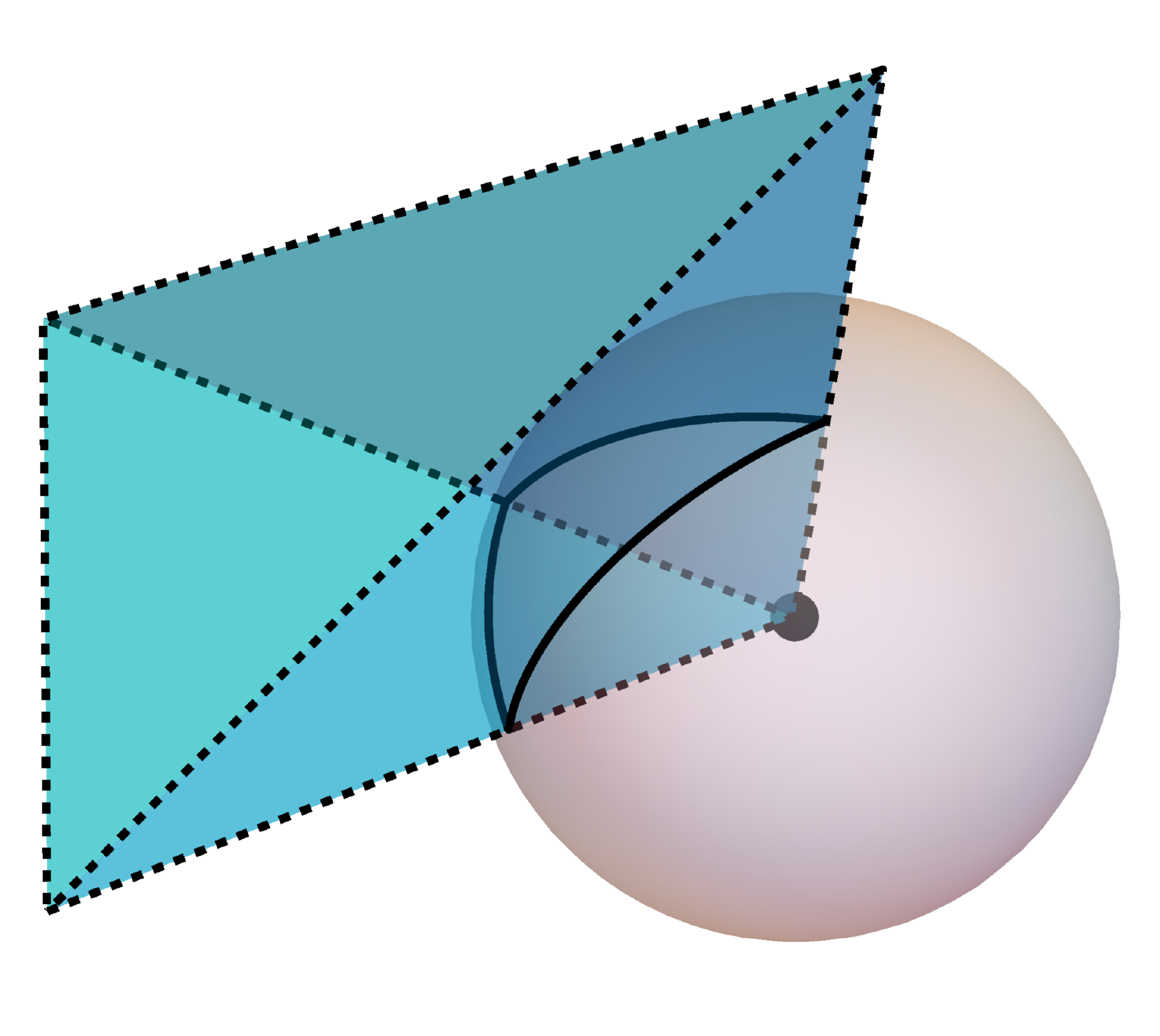}
    }
    \caption{The algorithm for computing the space partition required in Corollary~\ref{cor:vg3d}.}
    \label{fig:vg3d}
  \end{figure}

With $O(n^2)$ lines of intersection, there will be at most $O(n^4)$ vertices and faces.  For each pyramid, we must sort the $O(n)$ possible boundary planes, thus the running time for finding the expected utility in three dimensions is $O(n^5 \log n)$.
\end{proof}

This same strategy generalizes further to $k > 3$ dimensions, where we are now projecting $(k-2)$-dimensional hyperplanes formed by the intersection of the $(k-1)$-dimensional boundary hyperplanes onto a $k$-dimensional hypersphere.  However, determining methods for algorithmically identifying the ``faces'' on the hypersphere is outside the scope of this work.

\section{Pure Nash Equilibria}
\label{pure-ne}

In this section, we show a fundamental difference between the One Way
1-D Simultaneous Voronoi Game and the 1-D Simultaneous Voronoi
Game.  Recall that for these problems each of the $n$ players has a
choice of $m$ points on the unit circle; all players
simultaneously choose one of their $m$ points.  The utility for the
One Way variation of given player is equal to the distance to the
nearest chosen point clockwise from its chosen point, while for the
standard variation the utility is the size of the Voronoi cell (in
this case, an arc).

We show the standard Voronoi variation always has at least one pure
Nash equilibrium (for any number of choices per player), while it is
NP-hard to determine if the maximization version of the One Way variation has a pure Nash
equilibrium.  We also suggest the implications of these results for the 
higher dimensional setting.

\subsection{Existence of Pure Nash Equilibria in the 1-D Simultaneous Voronoi Game}

In the argument that follows we assume the choices of points are distinct.  The analysis can be easily modified
for the case where multiple players can choose the same point if ownership of that point is determined by a fixed preference order
(and other players have zero utility).  
However, if players choosing the same point share utility, then the theorem does not hold, as shown in \cite{mavronicolas2008voronoi}.
\begin{theorem}
\label{thm:simple}
A pure Nash equilibrium for the maximization and minimization versions 
of the 1-D Simultaneous Voronoi Game exists for any set of points.
\end{theorem}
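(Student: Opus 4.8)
The plan is to exhibit a generalized ordinal potential for the game, after which the existence of a pure Nash equilibrium is immediate because the strategy space is finite.

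\textbf{Step 1: a clean formula for a player's utility.} On the circle, if $S$ is the set of chosen points and $q\in S$, the Voronoi cell of $q$ is the arc bounded by the midpoints of the two gaps of $S$ adjacent to $q$, so its length is $\tfrac12(g^-+g^+)$ where $g^-,g^+$ are those two gap lengths. Writing $S_{-i}$ for the set of points chosen by the players other than $i$, this says that player $i$'s utility when it plays a point $p$ equals exactly $\tfrac12$ times the length of the gap of $S_{-i}$ containing $p$ (with the conventions that the single gap of a one-point set has length $1$, and the $n\le 1$ instances are trivial). In particular the utility does not depend on \emph{where} inside that gap $p$ lies, so a deviation of player $i$ changes its utility only if it moves $p$ into a different gap of $S_{-i}$.

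\textbf{Step 2: the potential, and the monotonicity.} To each profile associate the multiset of the $n$ gap lengths of the chosen set $S$, written as a vector sorted in decreasing order and compared lexicographically; call it $\Phi$. Consider a strictly improving unilateral deviation of player $i$, from a point whose gap in $S_{-i}$ has length $\alpha+\beta$ (where $\alpha,\beta$ are the two $S$-gaps adjacent to $i$'s old point) to a point in a different $S_{-i}$-gap of length $\gamma$, which it splits into $\gamma_1,\gamma_2$ with $\gamma_1+\gamma_2=\gamma$. Removing $i$'s old point merges $\alpha$ and $\beta$ and inserting the new point splits $\gamma$, so the gap multiset goes from $M$ to the multiset $M'$ obtained from $M$ by deleting one copy each of $\alpha,\beta,\gamma$ and inserting $\alpha+\beta,\gamma_1,\gamma_2$. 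Improvement means $\gamma>\alpha+\beta$ in the maximization game, and $\gamma<\alpha+\beta$ in the minimization game. In the first case $\gamma$ is strictly larger than each of $\alpha,\beta,\alpha+\beta,\gamma_1,\gamma_2$, so we delete the value $\gamma$ and insert only strictly smaller values, which makes the sorted vector lexicographically \emph{smaller}; in the second case $\alpha+\beta$ is strictly larger than each of $\alpha,\beta,\gamma,\gamma_1,\gamma_2$, so we insert $\alpha+\beta$ in place of strictly smaller values, which makes the sorted vector lexicographically \emph{larger}. I would make this precise with a short merge argument: the unaffected gaps occupy identical positions in the sorted vectors of $M$ and $M'$ up to the first position at which the ``special'' entry ($\gamma$ in the first case, $\alpha+\beta$ in the second) appears, and at that position the two vectors differ in the claimed direction because every other value involved lies strictly on one side of the special entry.

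\textbf{Step 3: conclusion.} Thus along any sequence of strictly improving deviations $\Phi$ changes strictly monotonically (downward for maximization, upward for minimization), so the strict-improvement relation on profiles is acyclic; since there are at most $m^n$ profiles, every such sequence terminates, and a profile admitting no improving deviation---i.e.\ a pure Nash equilibrium---exists, for both the maximization and the minimization variant. The cases $n\le 2$ are handled directly (for $n=2$ every cell has length exactly $1/2$, so every profile is an equilibrium), and the modification for coincident points under a fixed preference order is routine.

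The step I expect to require the most care is Step 2: one must check that an improving deviation always has the stated ``delete $\{\alpha,\beta,\gamma\}$, insert $\{\alpha+\beta,\gamma_1,\gamma_2\}$'' form---in particular that the target gap of $S_{-i}$ is genuinely a distinct gap from the two adjacent to the old point even when it shares an endpoint with one of them, so that exactly three gaps are affected---and that the lexicographic comparison of the two sorted gap vectors is valid including the degenerate case (small $n$) where no gap is left unaffected.
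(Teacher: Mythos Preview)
Your proof is correct and follows essentially the same argument as the paper: both use the multiset of arc lengths, ordered lexicographically on the sorted-decreasing vector, as a generalized ordinal potential, and both observe that a strictly improving deviation replaces $\{\alpha,\beta,\gamma\}$ by $\{\alpha+\beta,\gamma_1,\gamma_2\}$ with $\gamma>\alpha+\beta$ (maximization) or $\gamma<\alpha+\beta$ (minimization), forcing strict monotonicity of the potential. Your write-up is in fact a bit more careful than the paper's in isolating the edge cases and in noting that a deviation within the same $S_{-i}$-gap cannot be improving.
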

\begin{proof}
We follow an approach utilized in \cite[Lemma 4]{durr2007nash}.  We
define a natural total ordering on multi-sets of numbers $A$ and $B$.
For any two such multi-sets $A$ and $B$, if $|A| < |B|$ we have
$A \succ B$.  When $|A| = |B|$, we have $A \succ B$ if $\max A > \max
B$.  If $\max A = \max B$, then let $A'$ be $A$ with one copy of the
value $\max A$ removed, and similarly for $B'$; then $A \succ B$ also
when $\max A = \max B$ and $A' \succ B'$.

Now consider a collection of choices in the 1-D Simultaneous Voronoi
Game, and let $A = \{a_0,a_1,\ldots,a_{n-1}\}$ be the corresponding
arc lengths, starting from some chosen point and then in clockwise
order around the circle, induced by the choice of points.  Each player's
payoff is given by a value of the form $(a_i + a_{i+1})/2$ for a suitable value of $i$.
(One player's payoff is $(a_{n-1}+a_0)/2$.)  Consider the maximization version of the game.
If some player has a move that improves their utility, let them make that move.
Without loss of generality, suppose this player's payoff was given by $(a_i + a_{i+1})/2$,
and it moves somewhere on the arc with length $a_j$.  Note this means $a_j > a_i + a_{i+1}$.
We see that the arc lengths $A'$ are those of $A$ but with $a_i, a_{i+1},$ and $a_j$ replaced by
$a_i+a_{i+1}, x,$ and $y$ where $x + y = a_j$.  Hence $A \succ A'$, so after a finite number
of moves, this version of myopic best response converges to a pure Nash equilibrium.  

The argument for the minimization variation is analogous.
\end{proof}

We note that we leave as an open question to determine a bound on the number of steps a myopic best response
approach would take to reach a pure Nash equilibrium;  in particular, we do not yet know if the 
approach of Theorem~\ref{thm:simple} yields a pure Nash equilibrium in a polynomial number of steps.

\begin{figure}
	\centering
		\includegraphics[width=.6\linewidth]{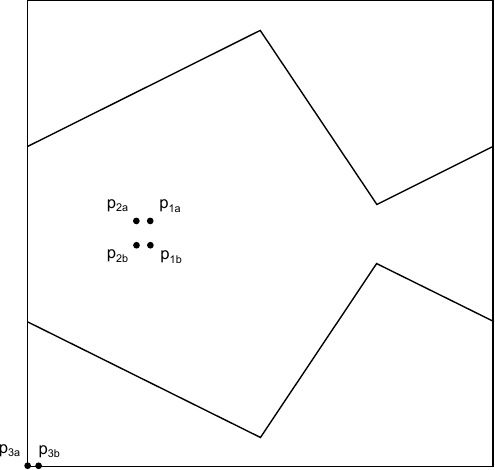}
	\caption{This is an example of a 2-D Simultaneous Voronoi Game in which no pure Nash equilibria exists.  Player 1 has points $(p_{1a}, p_{1b}) = ((1/4+\epsilon^2,1/2+\epsilon),(1/4+\epsilon^2,1/2-\epsilon))$, player 2 has points $(p_{2a}, p_{2b})=((1/4-\epsilon^2,1/2+\epsilon),(1/4-\epsilon^2,1/2-\epsilon))$, and player 3 has points $(p_{3a}, p_{3b}) = ((0, 0),(\epsilon, 0))$.  The Voronoi diagram shown is constructed from the points $(1/4,1/2)$ and $(0,0)$.}
	\label{fig:no_pne}
\end{figure}

We might have hoped that the above technique could allow us to show that for the 2-D 
Simultaneous Voronoi Game (and higher dimensions) that a pure Nash equilibrium exists.
Unfortunately, that is not the case.  One can readily find choices of 2 points for each of
3 players where no pure Nash equilibrium exists for both the maximization and minimization version
of the problem.  We have generated many such examples randomly, computing the Voronoi diagrams for
the eight resulting configurations.  One example in two dimensions is depicted in Figure~\ref{fig:no_pne}.  In this example, player 1 has choices $((1/4+\epsilon^2,1/2+\epsilon),(1/4+\epsilon^2,1/2-\epsilon))$, player 2 has choices $((1/4-\epsilon^2,1/2+\epsilon),(1/4-\epsilon^2,1/2-\epsilon))$, and player 3 has choices $((0, 0),(\epsilon, 0))$ for sufficiently small $\epsilon > 0$.  The idea here is that player 3's choice is irrelevant, and players 1 and 2 are dividing up the Voronoi cell owned by point $(1/4, 1/2)$ in the Voronoi diagram of the points $(1/4,1/2)$ and $(0,0)$.  If both player 1 and 2 choose their first point, or both choose their second point, then they divide the cell with a vertical line, which due to the geometry of the cell favors player 1.  However, if one of them chooses their first point and the other chooses their second point then they divide the cell with a roughly horizontal line, which gives each of them roughly half.  Thus, there is no choice of points for which neither wants to deviate.  This example applies
equally in higher dimensions (by making higher dimensional coordinates all zero).    

This example can be extended to show that for any number of players
$n$, there are settings of points for the players so no pure Nash equilibrium exists, showing that this
setting differs from previous work on symmetric Hotelling games on graphs, where it
has been shown that when there are sufficiently many players a pure Nash equilibrium always exists \cite{fournier2014hotelling}.
Specifically, use the same example but for players $4,\ldots,n$ both of their choices will be in an epsilon-small neighborhood of $(0,0)$.
We note that such examples do not disprove the possibility that a pure Nash equilibrium exists with high probability if the points are chosen randomly. 

Given that the 1-D Simultaneous Voronoi Game appears to have a very special structure (in terms of the existence of 
pure Nash equilibrium) that differs from $k$-D Simultaneous Voronoi Game for $k \geq 2$, it is natural to seek 
a 1-D variant that might shed more insight into the behavior in higher dimensions.  This motivates us to look at the 
One Way 1-D Simultaneous Voronoi Game.

\subsection{NP-Hardness of the One Way 1-D Simultaneous Voronoi Game}
\label{nphardness}

In contrast to the result of the previous subsection, we prove the NP-hardness of the maximization version of the One Way 1-D Simultaneous Voronoi Game whenever each player has $m \geq 4$ choices.  We leave as an open question to find a reduction for $m = 2$ or $3$, as well as for the minimization version.

In what follows we denote a player $r$'s $i$th point by $Z(r, i)$, as we have multiple different players we describe.

\begin{theorem} \label{thm:main}
The problem of determining if a pure Nash equilibrium (PNE) exists in the maximization version of a One Way 1-D Simultaneous Voronoi Game is NP-hard for $m \geq 4$.
\end{theorem}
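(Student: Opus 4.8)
The plan is to give a polynomial-time reduction from $3$-SAT, producing for each formula $\phi$ a One Way 1-D Simultaneous Voronoi Game with $m=4$ that has a pure Nash equilibrium if and only if $\phi$ is satisfiable; since a candidate profile can be verified in polynomial time, this actually yields NP-completeness, but NP-hardness is all that is claimed. The central device is a rigid \emph{scaffold}. We introduce a large number of anchor players, each having one designated position on the circle and three junk positions packed into a single microscopically small, highly crowded region. Working with a hierarchy of well-separated length scales, we arrange that (i) an anchor playing a junk position always receives a negligible arc whereas its designated position yields an arc larger by orders of magnitude \emph{no matter what every other player does}, so every anchor is pinned in every PNE; and (ii) all non-anchor (``gadget'') players have all four of their points inside short, pairwise disjoint \emph{windows} between consecutive anchors, so anchors and gadget players barely interact. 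Consequently, in any PNE the circle is a fixed, explicitly known pattern of anchor points, and the combinatorics is confined to the windows.

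Inside the windows I would build three gadget families. A \emph{variable gadget} for $x_i$ has a ``true slot'' and a ``false slot'' and a variable player $v_i$ with one point in each (padded with two junk points); which slot is occupied is the value the rest of the game reads. A \emph{wire gadget} is a short chain of two-choice players, each locked by its neighbours into one of two matching states, which carries and duplicates a variable's value to the windows of the clauses that mention it; because the layout is one-dimensional, wires that must cross are handled with a constant-size \emph{crossover gadget} that exchanges the states of two wires, exactly as in standard planarization reductions. A \emph{clause gadget} for $C_j = \ell_1 \vee \ell_2 \vee \ell_3$ is a small sub-game among a constant number of clause players whose best-response dynamics cycle with no fixed point in isolation, so the gadget has no PNE by itself; for each literal $\ell_t$ the window also contains an ``escape arc'' whose length is large precisely when the wire for $\ell_t$ delivers the satisfying value (the satisfying value leaves the arc empty, the falsifying value drops a blocking point into it), and some clause player has an escape point at the head of each such arc. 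When at least one literal of $C_j$ is true, the cycle is broken by parking in the corresponding escape arc, giving a locally stable configuration; when no literal is true, all escape arcs are short and the cycle has no fixed point. A direct bookkeeping shows that the busiest player --- a clause player carrying three literal-escapes plus one point in the cyclic core --- needs exactly four points, which is where the hypothesis $m \ge 4$ is used.

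Correctness then splits as usual. In the forward direction, given a satisfying assignment we set each $v_i$ and each wire to the corresponding state and put each clause gadget into the stable configuration licensed by one of its satisfied literals; the scaffold is pinned and a window-by-window check shows no player has a profitable deviation, so this profile is a PNE. For the converse, suppose a PNE exists; the scale-separation argument forces the scaffold into its canonical pattern and forces every wire to be internally consistent, so the profile induces a Boolean assignment $\alpha$. If some clause $C_j$ were falsified by $\alpha$, then all of its escape arcs would be short, and the cyclic analysis of $C_j$'s sub-game would exhibit a profitable unilateral deviation for one of its players regardless of the rest of the profile, contradicting the PNE property. Hence $\alpha$ satisfies $\phi$.

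I expect the main obstacle to be the design and verification of the clause gadget: a genuinely \emph{local} ``no PNE unless unlocked'' sub-game is delicate in the one-way metric, because cutting in clockwise of another player's point necessarily leaves that player a sliver, so ``attacking'' is never cheaply strictly profitable and a naive two-player matching-pennies construction collapses to a coordination game with an equilibrium; obtaining a cyclic best-response structure that (a) truly has no fixed point when locked, (b) acquires a fixed point when any single escape arc opens, and (c) behaves as intended against \emph{all} deviations available in the full game will require the multi-scale arc arithmetic to be pushed through with care. The crossover gadget, which must simultaneously be a faithful state-exchanger and locally stable, is the other delicate piece; everything else --- the scaffold, the variable gadgets, the wires, and the global accounting --- is careful but routine geometric estimation.
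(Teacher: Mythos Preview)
Your proposal is a plausible outline but not a proof: the two load-bearing pieces---the clause gadget with a genuinely cyclic best-response structure that admits no fixed point when locked and exactly acquires one when any escape opens, and the crossover gadget---are explicitly left unconstructed. You yourself flag that a naive matching-pennies attempt collapses in the one-way metric, and you do not supply a working replacement. Until those gadgets are written down with explicit coordinates and a case analysis of all deviations (including deviations \emph{out} of a gadget into another window, which your scale separation has to rule out quantitatively), this is a programme, not a reduction.

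The paper takes a rather different and more economical route. It reduces from Monotone 1-in-3 SAT rather than 3-SAT, and it dispenses with wires and crossovers entirely: a clause player is given one point in each of its three variables' regions (plus a fourth point in a separate ``Utility Enforcement Gadget''), so the clause player's \emph{choice of region} directly encodes which of its three variables is set true. Consistency of a variable is enforced not by propagating a value along a wire, but by requiring that in a variable's region either all relevant clause players are present (variable true) or none are (variable false); two auxiliary ``shadow'' players per clause and a ``wrap'' player per variable make any mixed occupancy drive some clause player's utility below a threshold $d$. The threshold is then enforced by the Utility Enforcement Gadget, a five-player sub-game containing a two-player cycle that has a stable configuration iff the monitored clause player stays out of it---this is exactly the ``no PNE unless unlocked'' device you are looking for, and the paper gives it explicitly. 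The net effect is that the paper's reduction needs no routing at all, whereas your approach inherits the full overhead of planarization on a one-dimensional circle. If you want to salvage your outline, the paper's UEG is a template for the cyclic clause core you need; but you would still owe the crossover construction, which the paper simply sidesteps.
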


Our proof will ultimately only use at most $4$ choices per player, however observe that we can essentially reduce the number of choices available to a player by placing leftover choices arbitrarily close in front of one of the desired choices, thus making the desired choices dominate the leftover choices.

Before we prove Theorem~\ref{thm:main}, we first introduce a gadget that will be
useful in our proof.  This is the Utility Enforcement Gadget (UEG),
depicted in Figure~\ref{fig:ueg}.  For now, in what follows $d$ and $\epsilon$
are suitable constants (we specify their values further in the proof
of the theorem).  The purpose of the UEG is to prevent a
PNE in which a given player $r$ has utility less than a given value
$d$.  The UEG consists of a region of length $d+7\epsilon$, where
$\epsilon$ is a value small enough that $d-\epsilon$ is greater than
any other utility $r$ can achieve less than $d$.  We also introduce
$5$ new players -- $b_1, b_2, x, y$ and $z$ -- and one new choice for
$r$: $Z(r, t+1)$, where $t$ is the number of other choices $r$ has.  Players $b_1$ and $b_2$ each only have one choice; these
form the boundaries of the UEG's region.  $Z(r, t+1)$ is $d-\epsilon$
before $Z(b_2, 1)$ with nothing in between them, so $r$ will always
prefer $Z(r, t+1)$ to any of its other choices with utility less than
$d$.  Players $x$ and $y$ each have two choices and $z$ has one choice.  $x,
y$ and $z$'s choices are arranged in the UEG's region such that they
have a stable configuration if and only if $r$ does not choose $Z(r,
t+1)$.

\begin{figure}
	\centering
		\includegraphics[width=.6\linewidth]{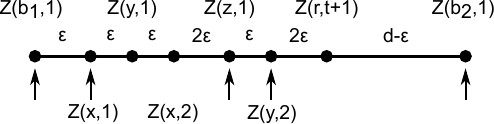}
	\caption{A Utility Enforcement Gadget for player $r$ and utility $d$.  The arrows point to the stable configuration for $b_1, b_2, x, y$ and $z$ when $r$ does not choose $Z(r,t+1)$.}
	\label{fig:ueg}
\end{figure}

Let $G$ be an instance of the One Way 1-D Simultaneous Voronoi Game, and let $G'$ be that instance modified to add a UEG (scaling as necessary) for a given player $r$ and utility $d$.

\begin{lem} \label{lem:ueg}
$G'$ has a PNE if and only if $G$ has a PNE in which $r$ has utility at least $d$.
\end{lem}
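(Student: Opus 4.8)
The plan is to prove both directions by carefully analyzing the forced behavior inside the UEG. First I would establish the key structural claim: in \emph{any} PNE of $G'$, the five new players $b_1,b_2,x,y,z$ together with the new choice $Z(r,t+1)$ behave as advertised, namely, the sub-configuration of $x,y,z$ (and the values of $b_1,b_2$, which are forced since those players are singletons) admits a stable state precisely when $r$ does not occupy $Z(r,t+1)$. To do this I would examine the geometry inside the length-$(d+7\epsilon)$ region: since $b_1$ and $b_2$ are singletons forming the region's endpoints, and $Z(r,t+1)$ sits $d-\epsilon$ before $Z(b_2,1)$ with nothing between, I would check case-by-case which of $x$'s and $y$'s two choices each picks. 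The crucial point is that when $Z(r,t+1)$ is \emph{not} chosen, there is a unique assignment for $x,y,z$ (the one indicated by the arrows in Figure~\ref{fig:ueg}) in which no gadget player can improve by switching; and when $Z(r,t+1)$ \emph{is} chosen, the presence of $r$'s point inside the region perturbs the available arcs so that one of $x$ or $y$ always has a profitable deviation, creating an infinite best-response cycle among $x,y,z$. I would verify that $\epsilon$ can be chosen small enough (relative to the scale of $G$ and to $d$) that these gadget interactions are entirely local: no player of $G$ is affected by, and no gadget player other than $r$'s new choice interacts with, anything outside the region.

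Granting that structural claim, the forward direction ($G'$ has a PNE $\Rightarrow$ $G$ has a PNE with $r$'s utility $\geq d$) goes as follows. Take a PNE of $G'$. By the structural claim, $r$ does not choose $Z(r,t+1)$ in it (otherwise $x,y,z$ have no stable sub-state, contradicting equilibrium). Restrict the strategy profile to the players of $G$ — I claim this is a PNE of $G$. The arcs owned by $G$'s players are unchanged by deleting the gadget because the gadget occupies only its own isolated region (inserted, e.g., by subdividing an arc and rescaling), so utilities and best responses for $G$'s players are identical in $G$ and in $G'$ given that $r$ sticks to its original $t$ choices. Moreover $r$'s utility in this PNE is at least $d$: if it were some value $u < d$, then by construction $u \le d-\epsilon$ is the utility $r$ would get from $Z(r,t+1)$ (which owns the empty gap of length $d-\epsilon$ up to $Z(b_2,1)$), so $r$ would strictly prefer to deviate to $Z(r,t+1)$ — contradicting that we had a PNE of $G'$. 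Hence $r$'s utility is $\geq d$.

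For the reverse direction, start from a PNE of $G$ in which $r$ has utility $\geq d$. Embed the gadget into an isolated region and extend the profile by having $b_1,b_2$ play their forced choices, $x,y,z$ play the stable sub-configuration guaranteed by the structural claim, and $r$ keep its original choice. I would check this is a PNE of $G'$: the gadget players are in their local stable state by construction; players of $G$ see exactly the same arcs as in the original PNE (the gadget region is inert for them) so none wants to deviate; and $r$ does not want to switch to $Z(r,t+1)$ because that yields only $d-\epsilon$, which is strictly less than its current utility $\ge d$, and $r$ does not want any other deviation since it already had none in the PNE of $G$ and the gadget adds no new arc length reachable from $r$'s original choices. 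That completes both directions.

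The main obstacle is the structural claim — specifically, the careful placement of $x$, $y$, $z$'s choices and the verification that inserting $r$'s point $Z(r,t+1)$ into the region destroys every stable sub-configuration of $\{x,y,z\}$ while its absence leaves exactly one. This is the ``rock-paper-scissors''-style no-PNE construction localized to a constant-size gadget, and getting the arc-length inequalities to work out simultaneously (so that the only way to break the cycle is for $r$ to vacate the region) is the delicate part; everything else is bookkeeping about locality and the choice of $\epsilon$.
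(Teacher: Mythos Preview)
Your proposal is correct and follows essentially the same approach as the paper: establish that the gadget players admit a stable sub-configuration if and only if $r$ does not occupy $Z(r,t+1)$, then use this in both directions exactly as you describe. One small inaccuracy: the best-response cycle when $r$ sits at $Z(r,t+1)$ is between $x$ and $y$ only (player $z$ has a single choice and is fixed), but this does not affect the argument.
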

\begin{proof}
If $G$ has a PNE in which $r$ has utility at least $d$, then $G'$ has a PNE in which all of the players from $G$ make the same choices and $x$ chooses $Z(x,1)$ and $y$ chooses $Z(y,2)$.  Thus $r$ is the only player in $G$ whose preference is affected by the UEG, and if $r$'s utility is at least $d$, then $r$ will not wish to deviate to $Z(r,t+1)$ which only yields a utility of $d-\epsilon$.

If $G$ has only PNEs in which $r$ has utility less than $d$, then $G'$ will not have a PNE, since the preference of every player from in $G$ is unaffected by the addition of the UEG except for $r$.  In $G'$, $r$ will wish to deviate from any configuration in which $r$ has utility less than $d$ by choosing $Z(r, t+1)$.  However, there can be no PNE in which $r$ chooses $Z(r, t+1)$, because that will cause $y$ to prefer $Z(y,1)$ to $Z(y,2)$ when $x$ chooses $Z(x,1)$ and prefer $Z(y,2)$ to $Z(y,1)$ when $x$ chooses $Z(x,2)$, while $x$ prefers $Z(x,2)$ to $Z(x,1)$ when $y$ chooses $Z(y,1)$ and prefers $Z(x,1)$ to $Z(x,2)$ when $y$ chooses $Z(y,2)$.
\end{proof}

Now that we have defined the UEG, we can move on to the main proof.

\begin{proof}[Proof of Theorem~\ref{thm:main}]

We reduce from the NP-complete Monotone 1-in-3 SAT problem \cite{schaefer1978complexity}.  In this problem, we are given a set of $k$ variables $v_1,v_2,\ldots,v_k$ and $l$ clauses $c_1,c_2,\ldots,c_l$.  Each clause is a set of three variables, and the problem is to determine whether there is an assignment of boolean values to the variables such that in each clause, exactly one of its values is true.

\begin{figure}
	\centering
		\includegraphics[width=.5\linewidth]{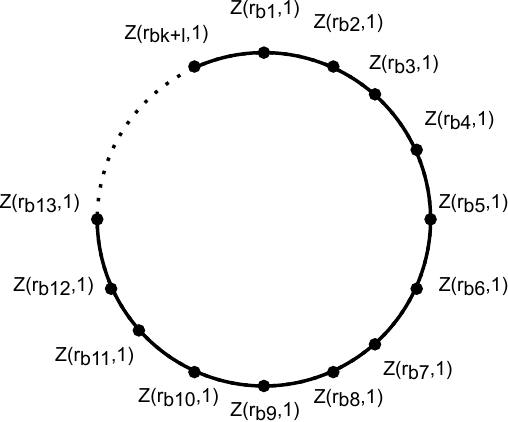}
	\caption{The partitioning of the circle using $k+l$ boundary players that each have only one point choice.}
	\label{fig:oneway_partition}
\end{figure}

\begin{figure}
	\centering
		\includegraphics[width=\linewidth]{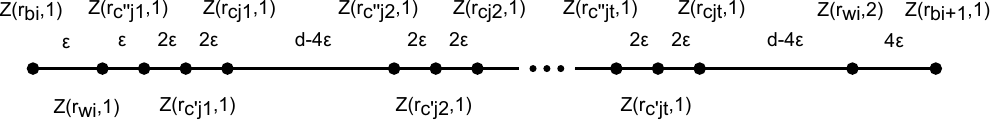}
	\caption{A variable region where for a variable contained in clauses $c_{j_1},\ldots,c_{j_t}$.  $\epsilon$ is much smaller than $d$.}
	\label{fig:oneway_varreg}
\end{figure}

Our reduction has $n = 2k+6l$ players.  The circle is
partitioned into $k+l$ equal sized regions using $k+l$ \emph{boundary
players} (denoted $r_{b_1},\ldots,r_{b_{k+l}}$) who each have only
one choice as shown in Figure~\ref{fig:oneway_partition}.  Within one of
these regions, we will sometimes use ``left'' to refer to
counterclockwise and ``right'' to refer to clockwise.  
We think of $l$ of the
players as corresponding to clauses (denoted $r_{c_1},\ldots,r_{c_l}$),
and refer to them as \emph{clause players}.  
We think of $k$ of the
regions as corresponding to variables.  Each clause player will have
one choice in each region that corresponds to one of that clause's
variables.  Which variable region the clause player chooses will
correspond to which of the clause's variables is set to true.  A
variable is true if every clause player corresponding to a clause
containing the variable chooses its point falling in the variable's
region.  A variable is false if none of the clause players
corresponding to clauses containing the variable choose points in the
variable's region.  Therefore, any intermediate configuration -- some
but not all of the relevant clause players choose points in a
variable's region -- is invalid.  To prevent such an invalid
configuration from yielding a PNE, we introduce an additional two
players per clause, that we will refer to as the \emph{shadow clause
players}, denoted by $r_{c_i'}$ and $r_{c_i''}$.  Each variable region
will also contain a \emph{wrap player} (denoted
$r_{w_1},\ldots,r_{w_k}$) who has one point choice on the left
of the region and one choice on the right side.  We place a wrap player
so that it prefers its left point unless one of the left most shadow
clause players' points are chosen, in which case it prefers its right
point.

In what follows, $d = 1$ and $\epsilon$ is a suitably small constant
much smaller than $d$.  To enforce a unit length for the circle, we simply
scale $d$ and $\epsilon$ appropriately.  A variable region will be
arranged as in Figure~\ref{fig:oneway_varreg}.  The clause players within
a variable region appear in the order of the clauses.  The clause
players are spaced $d$ apart, and each of the two shadow clause
players appear $2\epsilon$ and $4\epsilon$ before their corresponding
clause player.  The wrap player has one point $\epsilon$ before the
first shadow clause player, and one $4\epsilon$ before the terminating
boundary point.  This ensures that if neither of the left most shadow
clause players is chosen, then the wrap player will choose its left
point, and not affect any clause players.  However if one of the left
most shadow clause players is chosen, then the wrap player will choose
its right point thereby reducing the utility of the rightmost clause
player by $4\epsilon$.

Finally, we will have one UEG per clause player to enforce that its utility is
at least $d$.  These contribute the final $3l$ players (the boundaries for the
UEGs were included in the boundary players).

To complete the proof, we argue that any PNE in the game corresponds to a valid assignment of the variables and any valid assignment to the variables of corresponds to a PNE in the game.

Suppose that there exists a PNE such that there is a variable region with at least one clause player and at least one shadow clause player.  Then we must have that either there is a clause point followed immediately by a shadow clause point, or the first point is a shadow clause point and the last point is a clause point.  In the former case, we immediately have that the utility of the clause player is less than $d$, so such a PNE cannot exist by Lemma~\ref{lem:ueg}.  In the latter case, by the construction of the variable region, the wrap player in the region will prefer the point on the right of the region, where it will cause the utility of the clause player to once again be less than $d$.

Thus, any PNE has the property that each clause player is assigned to a variable region, and each variable region either has all clause players, or all shadow clause players.  This immediately gives us an assignment of variables that satisfies the 1-in-3 SAT formula, by setting each variable to be true if and only if there are clause players in the variable's region.

Suppose we have a variable assignment which satisfies the formula.  This corresponds to a PNE in the game.  Each clause player will choose the point in the region corresponding to the true variable in its clause.  So as long as all of the shadow clause players are assigned to other variable regions, none of the clause players will want to deviate since they will all achieve the maximum possible utility of $d$.  It remains to show that there is assignment of shadow clause players to remaining regions such that none of the shadow clause players want to deviate.  This can be achieved algorithmically.  First, pick an arbitrary assignment of the shadow clause players to the variable regions that are false.  Next, go through the shadow clause players in descending order of their corresponding clauses, and choose the best response for each one.  Because we ordered the clause players (and shadow clause players) in the regions by the order of the clauses, and the utility of a player is solely determined by the player to the right, no choice made by a player later in this procedure affects the utility of a player who chose earlier in this procedure.  Thus, at the end of the procedure all of the shadow clause players will still be in their best response.
\end{proof}

We conjecture that determining if a pure Nash equilibrium exists in the $k$-D Simultaneous Voronoi Game 
is NP-hard for $k \geq 2$;  however, we suspect that building the corresponding gadgets will prove technically
challenging.  

\section{Random Voronoi Games}

We now consider random variations of the Voronoi games we have
considered, where each player's available choices are chosen uniformly at
random from the underlying universe.  While it is not clear such a model corresponds to any
specific real-world scenario, such random problems are intrinsically
interesting combinatorially and in relation to other similar studied
problems.  For example, in the context of load-balancing in
distributed peer-to-peer systems, the authors
of \cite{byers2003simple} study a model where one begins with a
Voronoi diagram on $N$ points (chosen uniformly at random from the
universe, say the unit torus) corresponding to $N$ servers. Then $M$
agents sequentially enter; each is assigned $k$ random points from the
universe; and each agent chooses the one of its $k$ points that lies
in the Voronoi cell with the smallest number of agents, or {\em load}
for that server.  Other ``power-of-choice'' problems, such as the
Achlioptas process \cite{achlioptas2009explosive}, have spurred new
understanding of phenomena such as explosive percolation.  

Given our hardness results, a natural question is whether the One Way
1-D Simultaneous Voronoi Game has (or does not have) a pure Nash
equilibrium with high probability in the random variation.  While we
have not proven this result, we have proven bounds on the expected
number of pure Nash equilibria in the random setting that are
interesting in their own right and nearly answer this question.  In
particular, our careful analysis builds on the interesting
relationship between random arcs on a circle and weighted sums of
exponentially distributed random variables.

The following is our main result.
\begin{theorem} \label{thm:maxupperbound}
The expected number of PNE for the maximization version of the One Way 1-D Simultaneous Voronoi Game is at most $m$,
and at least $0.19^{m-1} m$.
\end{theorem}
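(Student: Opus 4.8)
The plan is to evaluate the expectation precisely enough to trap it between the two bounds. Write the expected number of pure Nash equilibria (PNE) as $\sum_{\sigma}\Pr[\sigma\text{ is a PNE}]$ over the $m^n$ pure profiles. Since each player's $m$ points are i.i.d.\ uniform, relabeling a player's points is measure-preserving and sends one profile to another, so $\Pr[\sigma\text{ is a PNE}]$ does not depend on $\sigma$; hence the expected count is $m^n\Pr[\sigma_0\text{ is a PNE}]$, where $\sigma_0$ has every player adopt its first point. Condition on the set $C$ of the $n$ adopted points and their owners, and assume $n\ge 2$ (the case $n=1$ is immediate, every profile being a PNE). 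List the $n$ arcs of $C$ cyclically as $A_1,\dots,A_n$, so that deleting the point that begins $A_i$ merges $A_i$ with its predecessor $A_{i-1}$. The player owning $A_i$ has $m-1$ other points, which are fresh i.i.d.\ uniform and independent across players; such a point lands in an arc of length $b$ of $C$ with this player's point removed with probability $b$ and then has clockwise distance to the next point uniform on $[0,b]$, so it fails to beat $A_i$ with probability $\sum_b\min(b,A_i)=f_i$, where $f_i=\sum_{j\ne i-1}\min(A_i,A_j)$. As these events are conditionally independent given $C$, the expected number of PNE equals $m^n\,\E_A\!\left[\prod_{i=1}^n f_i^{\,m-1}\right]$, the expectation over a $\mathrm{Dir}(1,\dots,1)$ vector $A=(A_1,\dots,A_n)$ of arc lengths in cyclic order.

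For the upper bound, use $f_i\le g_i:=\sum_{j=1}^n\min(A_i,A_j)$; then it suffices to prove the identity $\E_A\!\left[\prod_i g_i^{\,m-1}\right]=m^{1-n}$, which gives $m^n\cdot m^{1-n}=m$. The identity follows from a change of variables on the sorted arcs $B_1\le\cdots\le B_n$: with the normalized ordered spacings $W_r:=(n-r+1)(B_r-B_{r-1})$, the vector $(W_1,\dots,W_n)$ is again $\mathrm{Dir}(1,\dots,1)$ and $B_r=\sum_{s\le r}W_s/(n-s+1)$; a one-line telescoping computation then shows that the value of $g$ at the arc of rank $r$ is exactly $W_1+\cdots+W_r$. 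Hence $\prod_i g_i=\prod_{r=1}^{n-1}(W_1+\cdots+W_r)$ (the $r=n$ factor equals $1$), and the partial sums $W_1+\cdots+W_r$, $1\le r\le n-1$, are precisely the order statistics of $n-1$ i.i.d.\ uniforms; since the product of the order statistics is the product of the variables themselves, $\prod_i g_i$ is distributed as a product of $n-1$ i.i.d.\ uniforms, whose $(m-1)$-st moment is $m^{-(n-1)}$.

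For the lower bound, keep the exclusion: $f_i=g_i-\min(A_{i-1},A_i)$, so $\prod_i f_i^{\,m-1}=\left(\prod_i g_i\right)^{m-1}R^{\,m-1}$ with $R=\prod_i(1-\epsilon_i)$ and $\epsilon_i=\min(A_{i-1},A_i)/g_i$. Since $g_i\ge\min(A_i,A_i)+\min(A_i,A_{i-1})\ge 2\min(A_{i-1},A_i)$, we have $\epsilon_i\le\tfrac12$. Reweighting the law of $A$ by $\left(\prod_i g_i\right)^{m-1}/m^{1-n}$ produces a probability measure $\nu$, and (using the identity) the expected number of PNE equals $m^n\cdot m^{1-n}\cdot\E_\nu[R^{\,m-1}]=m\,\E_\nu[R^{\,m-1}]$; the reweighting depends only on the multiset of arc lengths, so under $\nu$ the cyclic arrangement of the arcs is still uniformly random given the lengths. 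Averaging $\sum_i\epsilon_i$ over that arrangement only: the term for the arc of rank $r$ becomes $\tfrac{1}{n-1}\cdot\tfrac{V_r-B_r}{V_r}$, where $V_r=W_1+\cdots+W_r$ is its $g$-value (using that the rank of a uniformly random cyclic predecessor is uniform on the other $n-1$ ranks and that $\sum_{s\ne r}\min(B_s,B_r)=V_r-B_r$); summing, $\E[\sum_i\epsilon_i\mid\text{lengths}]=\tfrac{1}{n-1}\big(n-\sum_r B_r/V_r\big)\le\tfrac{1}{n-1}(n-1)=1$ because $\sum_r B_r=1$ and every $V_r\le 1$. Hence $\E_\nu[\sum_i\epsilon_i]\le 1$. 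Finally, $\ln(1-x)$ is concave, so it dominates its chord on $[0,\tfrac12]$, giving $1-x\ge 4^{-x}$ and thus $R\ge 4^{-\sum_i\epsilon_i}$; by Jensen, $\E_\nu[R^{\,m-1}]\ge 4^{-(m-1)\E_\nu[\sum_i\epsilon_i]}\ge 4^{-(m-1)}\ge 0.19^{\,m-1}$, so the expected number of PNE is at least $0.19^{\,m-1}m$ (indeed at least $(1/4)^{m-1}m$).

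I expect the real work to be in nailing down the two exact handles rather than in the inequalities around them: for the upper bound, the moment identity $\E_A[\prod_i g_i^{\,m-1}]=m^{1-n}$ — equivalently, the change of variables that turns $\prod_i g_i$ into a product of independent uniforms, resting on the spacings representation of $(W_r)$ and the telescoping identity ``$g$ at rank $r$ equals $W_1+\cdots+W_r$'' — and for the lower bound, the estimate $\E_\nu[\sum_i\epsilon_i]\le 1$, which rests on averaging over the uniformly random cyclic arrangement and on checking that reweighting to $\nu$ leaves that arrangement undisturbed. Once these are verified the two bounds drop out with only elementary convexity; the delicate bookkeeping is in the Dirichlet/spacings manipulations and in tracking the reweighted measure.
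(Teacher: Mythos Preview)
Your argument is correct. The upper bound is the same as the paper's: your $g_{(r)}=W_1+\cdots+W_r$ is precisely the paper's $S_r/S_n$ via the Holst/Sukhatme representation, and recognizing $\prod_r g_{(r)}$ as a product of $n-1$ i.i.d.\ uniforms is just a repackaging of the paper's telescoping into independent $\mathrm{Beta}(i,1)$ ratios with moments $i/(im)=1/m$.

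The lower bound, however, is genuinely different from the paper's and in fact slightly sharper. The paper bounds $f_i\ge g_i-A_i$ (discarding the predecessor dependence entirely), obtains sums $\sum_j\frac{j-1}{j}X_j$, and then invokes a separate stochastic-domination lemma (proved in an appendix) to replace the weights $(j-1)/j$ by their averages $c_i$; the constant $0.19$ is $\inf_n\prod_i c_i\approx e^{-\pi^2/6}$. You instead keep the exact $f_i=g_i(1-\epsilon_i)$, reweight by $(\prod g_i)^{m-1}$ to the tilted law $\nu$, observe that this reweighting is a function of the multiset of arc lengths only (so the cyclic arrangement stays uniform under $\nu$), and then average $\epsilon_i$ over that arrangement to get $\E_\nu[\sum_i\epsilon_i]\le 1$; the chord bound $1-x\ge 4^{-x}$ on $[0,\tfrac12]$ plus Jensen then yields $(1/4)^{m-1}m$, which dominates the paper's $0.19^{m-1}m$. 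Your route avoids the auxiliary domination lemma entirely and exploits the symmetry of the Dirichlet law more directly; the paper's route is more computational but stays within a single measure. Both are valid, and your constant is the better one.
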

Interestingly, our bounds depend on the number of choices $m$, not the number of players.  Unfortunately, this means
that we cannot use these expectation bounds directly to show that, for example, the probability of a PNE existing is
exponentially small in $n$.  But the bounds provide insight by showing that pure Nash equilibria are typically few
in number in the random case.  

In proving Theorem~\ref{thm:maxupperbound}, we need the following well-known property of exponential random variables.  Let $X_1,\ldots,X_n$ be i.i.d. exponential random variables.  Let $S_n = \sum_{j=1}^n X_j$.  Recall that $S_n$ is said to have a {\em gamma distribution}, and we use facts about the gamma distribution later in our analysis.
Similarly, $S_{n-1}/S_n$ is said to have a {\em beta distribution}, and we use facts about beta distributions as well.  For example,
\begin{lem} \label{lem:indep}
$S_n, \frac{S_1}{S_2}, \frac{S_2}{S_3}, \ldots, \frac{S_{n-1}}{S_n}$ are all mutually independent.
\end{lem}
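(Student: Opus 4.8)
The plan is to exploit the classical representation of a gamma random vector in terms of a ``size'' variable and a Dirichlet-distributed vector of ``proportions,'' which are independent. Concretely, I would first recall that $(X_1,\dots,X_n)$ i.i.d.\ $\mathrm{Exp}(1)$ implies that $S_n = \sum_{j=1}^n X_j$ is independent of the normalized vector $(X_1/S_n,\dots,X_n/S_n)$, which follows a Dirichlet$(1,\dots,1)$ distribution (equivalently, the uniform distribution on the simplex). This is a standard fact, provable by the change of variables $(x_1,\dots,x_n)\mapsto\bigl(s,\; u_1,\dots,u_{n-1}\bigr)$ with $s = \sum_j x_j$ and $u_i = x_i/s$; the Jacobian is $s^{n-1}$, so the joint density $e^{-s}$ factors as $\bigl(s^{n-1}e^{-s}/(n-1)!\bigr)\cdot\bigl((n-1)!\cdot\mathbf{1}_{\Delta}\bigr)$, exhibiting the independence of $S_n$ from $(u_1,\dots,u_{n-1})$ and identifying the two marginals.

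Next I would express each ratio $S_{k-1}/S_k$ in terms of the proportions $u_i = X_i/S_n$ only, so that they are all functions of the Dirichlet vector and hence jointly independent of $S_n$. Since $S_k = \sum_{j\le k} X_j = S_n\sum_{j\le k} u_j$, we get
\[
\frac{S_{k-1}}{S_k} \;=\; \frac{\sum_{j=1}^{k-1} u_j}{\sum_{j=1}^{k} u_j},
\]
which depends on $(u_1,\dots,u_{n-1})$ but not on $S_n$. This already separates $S_n$ from the remaining $n-1$ variables; it remains to show the ratios $R_k := S_{k-1}/S_k$, $k=2,\dots,n$, are mutually independent among themselves.

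For that last step I would argue by induction on $n$, peeling off one variable at a time. Observe that $R_n = S_{n-1}/S_n$ is, by the $n$-dimensional version of the size/proportion decomposition applied with the grouping $\{X_1,\dots,X_{n-1}\}$ versus $\{X_n\}$, distributed as $\mathrm{Beta}(n-1,1)$ and independent of $S_{n-1}$ (indeed $S_{n-1}$ and $X_n$ are independent, and $S_{n-1}/(S_{n-1}+X_n)$ is independent of $S_{n-1}+X_n = S_n$; but more to the point $S_{n-1}/S_n$ is a function of the pair $(S_{n-1},X_n)$ that is independent of $S_{n-1}$ by the same Jacobian computation in two coordinates). Since $R_2,\dots,R_{n-1}$ are functions of $X_1,\dots,X_{n-1}$ alone, hence of $S_{n-1}$ and the internal proportions of $(X_1,\dots,X_{n-1})$, the induction hypothesis gives that $R_2,\dots,R_{n-1}$ are mutually independent and independent of $S_{n-1}$; combining with the independence of $R_n$ from $S_{n-1}$ (and the fact that conditionally on $S_{n-1}$, $R_n$ still depends only on the extra randomness in $X_n$) yields mutual independence of $R_2,\dots,R_n$, and then of these together with $S_n$ by the first paragraph. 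The main obstacle is being careful about exactly which $\sigma$-algebras one is conditioning on when stitching the inductive step together — the cleanest route is probably to just do the single global change of variables $(x_1,\dots,x_n)\mapsto(S_n, R_2,\dots,R_n)$, compute its Jacobian, and check that the transformed density factors into a product of a gamma density in $S_n$ and a product of beta densities in the $R_k$'s, which simultaneously establishes all the claimed independences in one shot.
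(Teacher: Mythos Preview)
Your proposal is correct. The paper itself does not give a self-contained proof of this lemma: it simply cites Holst~\cite{holst1980lengths} and offers a one-line heuristic, namely that the $X_i$ are the inter-arrival gaps of a Poisson process, so conditionally on $S_n$ the first $n-1$ arrival times are i.i.d.\ uniform on $[0,S_n]$, from which the claimed independence can be read off.

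Your route---the gamma/Dirichlet ``size versus proportions'' decomposition, followed either by induction or by a single global Jacobian change of variables $(x_1,\dots,x_n)\mapsto(S_n,R_2,\dots,R_n)$---is a standard and fully rigorous way to establish the result, and is in fact the same underlying mathematics as the paper's Poisson-process hint: saying that the normalized vector $(X_1/S_n,\dots,X_n/S_n)$ is $\mathrm{Dirichlet}(1,\dots,1)$ independently of $S_n$ is exactly the statement that, given $S_n$, the partial sums $S_1,\dots,S_{n-1}$ are distributed as the order statistics of $n-1$ uniforms on $[0,S_n]$. Your suggestion at the end---to do the one-shot Jacobian computation and verify that the density factors as a $\mathrm{Gamma}(n,1)$ density in $S_n$ times $\prod_{k=2}^n \mathrm{Beta}(k-1,1)$ densities in the $R_k$---is the cleanest execution and avoids the bookkeeping worry you flagged about which $\sigma$-algebras one is conditioning on in the inductive step.
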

See e.g. \cite{holst1980lengths} for its discussion of exponential random variables.  In particular, the $X_i$ can be viewed as the gaps in the arrivals of a Poisson process;  the $n-1$ arrivals before the last are uniformly distributed on the interval $[0,S_n]$, from which one can derive the lemma above.  

Another important fact we use is the following:
\begin{fact}
\label{fact}
We have $\E\left[\left(S_{j}/S_{j+1}\right)^{t}\right]=\prod_{i=0}^{t-1}\frac{j+i}{j+1+i}=\frac{j}{j+t}$,
since \\ $S_{j}/S_{j+1}\sim Beta\left(j,1\right)$.
\end{fact}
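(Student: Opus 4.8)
The plan is to reduce the statement to two elementary pieces: the $t$th moment of a $\mathrm{Beta}(j,1)$ variable, and the telescoping of the displayed product. I take the distributional claim $S_j/S_{j+1}\sim\mathrm{Beta}(j,1)$ as the natural starting point, since it is exactly what the Poisson-process picture described after Lemma~\ref{lem:indep} supplies. Conditioned on the $(j{+}1)$st arrival time $S_{j+1}$, the first $j$ arrival times $S_1,\ldots,S_j$ are distributed as the order statistics of $j$ i.i.d.\ uniforms on $[0,S_{j+1}]$, so the scale-invariant ratio $S_j/S_{j+1}$ is the maximum of $j$ i.i.d.\ uniforms on $[0,1]$. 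That maximum has distribution function $x^{j}$ and hence density $j\,x^{j-1}$ on $[0,1]$, which is precisely the $\mathrm{Beta}(j,1)$ density. (Equivalently, one may invoke the standard fact that for independent $U\sim\mathrm{Gamma}(j)$ and $V\sim\mathrm{Gamma}(1)$ the ratio $U/(U+V)$ is $\mathrm{Beta}(j,1)$, applied to $U=S_j$ and $V=X_{j+1}$; the common rate of the $X_i$ is irrelevant by scale invariance.)

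Given this density, both the outer equality and the final closed form follow from a single integration. I would write
\[
\E\!\left[\left(\tfrac{S_j}{S_{j+1}}\right)^{t}\right]=\int_0^1 x^{t}\cdot j\,x^{j-1}\,dx=j\int_0^1 x^{t+j-1}\,dx=\frac{j}{j+t},
\]
which is valid for every real $t>-j$, in particular for the nonnegative integer exponents that arise in the application. This already delivers $\E[(S_j/S_{j+1})^t]=\frac{j}{j+t}$ directly, with no further machinery.

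To recover the intermediate product expression, I would note that for a nonnegative integer $t$ the same quantity can be organized as a ratio of rising factorials. Using the Beta-function identity $\E[W^t]=B(j+t,1)/B(j,1)$ with $W\sim\mathrm{Beta}(j,1)$, and $B(a,1)=1/a$, one obtains $\E[W^t]=\prod_{i=0}^{t-1}\frac{j+i}{j+1+i}$. The product then telescopes: written as $\frac{j}{j+1}\cdot\frac{j+1}{j+2}\cdots\frac{j+t-1}{j+t}$, each numerator cancels the preceding denominator, leaving $\frac{j}{j+t}$. This verifies the inner equality and confirms that the two formulas agree whenever $t$ is a nonnegative integer.

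There is essentially no hard step here. The only point requiring genuine care is the justification of $S_j/S_{j+1}\sim\mathrm{Beta}(j,1)$, and even that is immediate from the order-statistics representation already invoked for Lemma~\ref{lem:indep}. A minor bookkeeping remark I would include is that the product form is literally defined only for integer $t$, whereas the closed form $\frac{j}{j+t}$ is the one that persists for all admissible real $t$; accordingly I would present the integration identity as the primary content and treat the product merely as its integer-$t$ refinement.
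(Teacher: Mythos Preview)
Your proposal is correct and aligns with the paper's own justification. The paper states this as a Fact rather than proving it in detail: the entire argument given there is the clause ``since $S_j/S_{j+1}\sim\mathrm{Beta}(j,1)$,'' leaving the moment computation and the telescoping as implicit standard calculations. Your write-up simply fills in those steps---the order-statistics derivation of the $\mathrm{Beta}(j,1)$ law (consistent with the Poisson-process remark following Lemma~\ref{lem:indep}), the one-line integral for the moment, and the telescoping of the product---so there is no divergence in approach, only in level of detail.
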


We note that in some of our arguments, the ordering of the variables
becomes reversed, and we consider sums of the form $\sum_{j=n-i+1}^n
X_j$.  Of course this has the same distribution as $\sum_{j=1}^i X_i$,
and the corresponding version of Lemma~\ref{lem:indep} holds.  Where
convenient, we therefore refer to $\sum_{j=n-i+1}^n X_j = S_i$ where there
is no ambiguity as to the desired meaning.

\begin{proof}[Proof of Theorem~\ref{thm:maxupperbound}]

We begin by using linearity of expectations to write the expected number of PNE in terms of the probability that each player choosing their first choice will yield a stable configuration.

\begin{align*}
\E[\text{\# PNE}] &= m^n \cdot \Pr[\text{first choices are stable}] \\
&= m^n \cdot \E[\Pr[\text{first choices are stable }|\text{ position of first choices}]].
\end{align*}

Partition the circle into arcs according to the players' first choice points.  Let $A_{i}$ be the length of the $i$th smallest arc.  As shown in \cite{holst1980lengths}, the $A_{i}$ are distributed jointly as
$$A_{i} \sim \frac{1}{S_n} \sum_{j=n-i+1}^n \frac{X_j}{j}$$
where again the $X_j$ are i.i.d. exponential random variables of mean 1 and $S_n = \sum_{j=1}^n X_j$.
We say that an arc is stable if the player whose point starts the arc (going clockwise) does not wish to deviate to any of their other points. 
Given the position of the first choices, the probability each arc is stable depends only on the other choices
available to the player that owns the arc, and hence the stability of the arcs are independent.  Therefore
\begin{align*}
\Pr&[\text{first choices are stable }|\text{ position of first choices}] \\
&= \prod_{i=1}^n{\Pr[i\text{th smallest arc is stable }|\text{ position of first choices}]}.
\end{align*}

If the arc is the $i$th smallest, then it will be stable if the other choices fall in the same arc, one of the $i-1$ smaller arcs,  or the front $A_i$-length portion of the $(n-i)$ larger arcs -- except in the latter two cases, we must take into account that if a choice falls immediately backward into the arc directly counterclockwise of the current arc, then the arc is not stable.  We therefore have the following calculation:  
\begin{align*}
\Pr&[i\text{th smallest arc is stable }|\text{ position of first choices}] \\
&= \left((n-i)A_i + \sum_{j=1}^i A_j - \min(A_i, \text{length of arc before i})\right)^{m-1} \\
&\leq \left((n-i)A_i + \sum_{j=1}^i A_j\right)^{m-1} \\
&= S_n^{-(m-1)} \left((n-i)\sum_{j=n-i+1}^n \frac{X_j}{j} + \sum_{j=1}^i \sum_{k=n-j+1}^n \frac{X_k}{k}\right)^{m-1} \\
&= S_n^{-(m-1)} \left((n-i)\sum_{j=n-i+1}^n \frac{X_j}{j} + \sum_{k=n-i+1}^n (k-n+i) \frac{X_k}{k}\right)^{m-1} \\
&= S_n^{-(m-1)} \left(\sum_{j=n-i+1}^n (n-i+j-n+i) \frac{X_j}{j}\right)^{m-1} \\
&= S_n^{-(m-1)} \left(\sum_{j=n-i+1}^n X_j\right)^{m-1} \\
&= \left(\frac{S_i}{S_n}\right)^{m-1}.
\end{align*}
Note we have used $\sum_{j=n-i+1}^n X_j = S_i$ for convenience.  Our resulting bound has a surprisingly clean
form in terms of the $S_i$.  

\eat{
Thus,
\begin{align*}
\Pr[\text{first choices are stable}] &\leq \E\left[S_n^{-(m-1)n}\prod_{i=1}^n S_i^{m-1}\right] \\
&= \E\left[S_n^{-(m-1)(n-1)}\prod_{i=1}^{n-1}S_i^{m-1}\right] \\
&= \E\left[\prod_{i=1}^{n-1}S_i^{m-1}\right] / \E\left[S_n^{(m-1)(n-1)}\right]
\end{align*}

The final equality follows from Lemma~\ref{lem:indep} because
\begin{align*}
\E\left[\prod_{i=1}^{n-1}S_i^{m-1}\right] &= \E\left[\prod_{i=1}^{n-1}\left(\frac{S_i}{S_n}\right)^{m-1} S_n^{(m-1)(n-1)}\right] \\
&= \E\left[\prod_{i=1}^{n-1} \left(\prod_{j=i}^{n-1} \frac{S_j}{S_{j+1}}\right)^{m-1} S_n^{(m-1)(n-1)}\right] \\
&= \E\left[\prod_{i=1}^{n-1}\left(\frac{S_i}{S_n}\right)^{m-1}\right]\E\left[S_n^{(m-1)(n-1)}\right] \\
&= \E\left[S_n^{-(m-1)(n-1)}\prod_{i=1}^{n-1}S_i^{m-1}\right]\E\left[S_n^{(m-1)(n-1)}\right].
\end{align*}

Now we compute the numerator and denominator separately.  Using the known values for the moments of gamma and beta distributed random variables,
we have
\begin{align*}
\E[S_n^{(m-1)(n-1)}] &= \frac{\Gamma(n+(m-1)(n-1))}{\Gamma(n)} \\
&= \frac{(m(n-1))!}{(n-1)!}.
\end{align*}

Also, 
\begin{align*}
\E\left[\prod_{i=1}^{n-1}S_i^{m-1}\right] &= \E\left[S_{n-1}^{(n-1)(m-1)} \left(\frac{S_{n-2}}{S_{n-1}}\right)^{(n-2)(m-1)} \left(\frac{S_{n-3}}{S_{n-2}}\right)^{(n-3)(m-1)} \ldots \left(\frac{S_{1}}{S_{2}}\right)^{m-1}\right] \\
&= \E\left[S_{n-1}^{(n-1)(m-1)} \prod_{j=1}^{n-2} \left(\frac{S_{j}}{S_{j+1}}\right)^{j(m-1)} \right] \\
&= \E\left[S_{n-1}^{(n-1)(m-1)}\right] \prod_{j=1}^{n-2} \left[ \E\left(\frac{S_{j}}{S_{j+1}}\right)^{j(m-1)} \right] \\
&= \frac{\Gamma(n-1+(m-1)(n-1))}{\Gamma(n-1)} \prod_{j=1}^{n-2} \frac{\Gamma(j+1)\Gamma(j+j(m-1))}{\Gamma(j+1+j(m-1))\Gamma(j)} \\
&= \frac{(m(n-1)-1)!}{(n-2)!} \prod_{j=1}^{n-2} \frac{j!(mj-1)!}{(mj)!(j-1)!} \\
&= \frac{(m(n-1)-1)!}{(n-2)!} \prod_{j=1}^{n-2} \frac{j}{mj} \\
&= \frac{(m(n-1)-1)!}{m^{n-2}(n-2)!}
\end{align*}
We once again used Lemma~\ref{lem:indep} in the third and fourth inequalities, and facts about the moments of beta and gamma distributions.

Finally,
\begin{align*}
\E[\text{\# PNE}] &\leq m^n \frac{(m(n-1)-1)!}{m^{n-2}(n-2)!} / \frac{(m(n-1))!}{(n-1)!} \\
&= m^n \frac{(m(n-1)-1)!(n-1)!}{m^{n-2}(n-2)!(m(n-1))!} \\
&= m^n \frac{(n-1)}{m^{n-2}(m(n-1))} \\
&= m^n \frac{1}{m^{n-1}} \\
&= m.
\end{align*}
}

Thus, by Lemma~\ref{lem:indep} and Fact~\ref{fact}
\begin{align*}
\Pr\left[\mbox{first choices are stable}\right] & \le\E\left[\prod_{i=1}^{n}\left(\frac{S_{i}}{S_{n}}\right)^{m-1}\right] \\
                                                & =\E\left[\prod_{i=1}^{n-1}\left(\frac{S_{i}}{S_{i+1}}\cdot\frac{S_{i+1}}{S_{i+2}}\cdot\cdots\cdot\frac{S_{n-1}}{S_{n}}\right)^{m-1}\right]\\
& =\E\left[\prod_{i=1}^{n-1}\left(\frac{S_{i}}{S_{i+1}}\right)^{i\left(m-1\right)}\right]=\prod_{i=1}^{n-1}\E\left[\left(\frac{S_{i}}{S_{i+1}}\right)^{i\left(m-1\right)}\right]\\
& =\prod_{i=1}^{n-1}\frac{i}{i+i\left(m-1\right)}=\frac{1}{m^{n-1}}.
\end{align*}
It then follows that
\begin{align*}
\E\left[\mbox{\#PNE}\right] & \le m^{n}\cdot\frac{1}{m^{n-1}}=m.
\end{align*}

We can similarly find a lower bound, although some additional technical work is required.

\begin{align*}
\Pr&[i\text{th smallest arc is stable }|\text{ position of first choices}] \\
&= \left((n-i)A_i + \sum_{j=1}^i A_j - \min(A_i, \text{length of arc before i})\right)^{m-1} \\
&\geq \left((n-i)A_i + \sum_{j=1}^i A_j - A_i\right)^{m-1} \\
&= \left((n-i-1)A_i + \sum_{j=1}^i A_j\right)^{m-1} \\
&= S_n^{-(m-1)} \left(\sum_{j=n-i+1}^n (n-i-1+j-n+i) \frac{X_j}{j}\right)^{m-1} \\
&= S_n^{-(m-1)} \left(\sum_{j=n-i+1}^n \frac{(j-1)X_j}{j}\right)^{m-1}
\end{align*}

Hence
\begin{align*}
\Pr[\text{first choices are stable}] &\geq \E\left[S_n^{-(m-1)n}\prod_{i=1}^n \left(\sum_{j=n-i+1}^n \frac{(j-1)X_j}{j}\right)^{m-1}\right].
\end{align*}
A simple stochastic domination argument (provided in Appendix B) shows that the expectation on the right side decreases if, in each term in the product, we 
equalize the coefficient, so that instead of terms of the form $\frac{(j-1)X_j}{j}$, the coefficient for all terms of the sum in the $i$th
term of the product is the average $c_i = \frac{1}{i} \sum_{j=n-i+1}^n \frac{j-1}{j}$.  This gives

\begin{align*}
\Pr[\text{first choices are stable}] &\geq  \E\left[S_n^{-(m-1)n}\prod_{i=1}^n \left(\sum_{j=n-i+1}^n c_i X_j \right)^{m-1}\right] \\
&= \E\left[S_n^{-(m-1)n}\prod_{i=1}^n \left(c_i S_i\right)^{m-1}\right] \\
&= \left( \prod_{i=1}^n c_i^{m-1} \right) \E\left[\prod_{i=1}^n \left(\frac{S_i}{S_n}\right)^{m-1}\right].
\end{align*}

Observe that this expectation is the same as the one we computed in the upper bound. Therefore
\begin{align*}
\E[\text{\# PNE}] &\geq \left(\prod_{i=1}^n c_i^{m-1}\right) m \\
&\geq 0.19^{m-1} m.
\end{align*}
The proof of the final inequality is presented in Appendix B.
\end{proof}

\eat{
{\bf TM: this is Rani's argument, which needs to be to formalized since that first $\approx$ is actually a $\leq$}
\begin{theorem} \label{thm:minupperbound}
The expected number of PNE for the minimization version of the One Way 1-D Simultaneous Voronoi Game is at most $TBD$.
\end{theorem}

\begin{theorem} \label{thm:minlowerbound}
The expected number of PNE for the minimization version of the One Way 1-D Simultaneous Voronoi Game is at least $\frac{m(m-1)}{(mn-1)n^{m-1}}$.
\end{theorem}
}

We note that similar calculations can be done for the minimization version, although we have not yet found a clean form
for the upper bound.  We can, however, state the following lower bound, showing the expected number of pure Nash equilibria is
at least inverse polynomial in $n$ for a fixed number of choices $m$.  
\begin{theorem} \label{thm:minlowerbound}
The expected number of PNE for the minimization version of the One Way 1-D Simultaneous Voronoi Game is at least $\frac{m(m-1)}{(mn-1)n^{m-1}}$.
\end{theorem}
\begin{proof}
This proof will proceed similarly to that of Theorem~\ref{thm:maxupperbound}.  We use the same notation, except that here $S_i = \sum_{j=1}^i X_j$.

An arc is stable if the player whose point starts the arc does not wish to deviate to any of their other points.  If the arc is the $i$th smallest, then it will be stable if the other choices fall in the preceding arc, or in a portion of length $A_j-A_i$  at the start of the $j$th smallest arc for $j > i$.  Therefore
\begin{align*}
\Pr&[i\text{th smallest arc is stable }|\text{ position of first choices}] \\
&= \left(\sum_{j=i+1}^n (A_j - A_i) + \min(A_i, \text{length of arc before i})\right)^{m-1} \\
&\geq \left(\sum_{j=i}^n (A_j - A_i) + A_1\right)^{m-1} \\
&= \left(\sum_{j=i}^n A_j - (n-i+1) A_i + A_1\right)^{m-1} \\
&= S_n^{-(m-1)}\left(\sum_{j=i}^n \sum_{k=n-j+1}^n \frac{X_k}{k} - (n-i+1) \sum_{k=n-i+1}^n \frac{X_k}{k} + \frac{X_n}{n}\right)^{m-1} \\
&= S_n^{-(m-1)}\left( \sum_{k=1}^{n-i} k \frac{X_k}{k} + \sum_{k=1}^n (n-i+1) \frac{X_k}{k} - (n-i+1) \sum_{k=n-i+1}^n \frac{X_k}{k} + \frac{X_n}{n}\right)^{m-1} \\
&= S_n^{-(m-1)}\left( \sum_{k=1}^{n-i} X_k + \frac{X_n}{n}\right)^{m-1} \\
&= S_n^{-(m-1)}\left( S_{n-i} + \frac{X_n}{n}\right)^{m-1}.
\end{align*}

Using the fact that probabilities of each arc being stable are independent when conditioning on their positions, we have
\begin{align*}
\Pr&[\text{first choices are stable}] \geq \E\left[S_n^{-(m-1)n}\prod_{i=1}^n \left(S_{n-i} + \frac{X_n}{n} \right)^{m-1}\right] \\
&\geq \E\left[S_n^{-(m-1)n} \left(\frac{X_n}{n}\right)^{m-1} \prod_{i=1}^{n-1} S_i^{m-1}\right] \\
&= \E\left[S_n^{-(m-1)n} X_n^{m-1} \prod_{i=1}^{n-1} S_i^{m-1}\right] / n^{m-1} \\
&= \E\left[S_n^{-(m-1)n} \left(\left(1-\frac{S_{n-1}}{S_n}\right)S_n\right)^{m-1} \prod_{i=1}^{n-1} S_i^{m-1}\right] / n^{m-1} \\
&= \E\left[S_n^{-(m-1)(n-1)} \left(1-\frac{S_{n-1}}{S_n}\right)^{m-1} \prod_{i=1}^{n-1} S_i^{m-1}\right] / n^{m-1} \\
&= \left(\E\left[\prod_{i=1}^{n-1} \left(\frac{S_i}{S_n}\right)^{m-1}\right] - 
\E\left[\left(\frac{S_{n-1}}{S_n}\right)^{m-1} \prod_{i=1}^{n-1} \left(\frac{S_i}{S_n}\right)^{m-1}\right]\right) / n^{m-1}.
\end{align*}

Observe here the occurrence of 
$$\E\left[\prod_{i=1}^{n-1} \left(\frac{S_i}{S_n}\right)^{m-1}\right] = \E\left[\prod_{i=1}^n \left(\frac{S_i}{S_n}\right)^{m-1}\right]$$ 
which we found to equal $m^{-(n-1)}$ in the proof of Theorem~\ref{thm:maxupperbound}.  Evaluating the other expectation using Lemma~\ref{lem:indep} and Fact~\ref{fact} we have

\begin{align*}
\E&\left[\left(\frac{S_{n-1}}{S_n}\right)^{m-1} \prod_{i=1}^{n-1} \left(\frac{S_i}{S_n}\right)^{m-1}\right] \\
&=\E\left[\left(\frac{S_{n-1}}{S_n}\right)^{m-1} \prod_{i=1}^{n-1}\left(\frac{S_{i}}{S_{i+1}}\cdot\frac{S_{i+1}}{S_{i+2}}\cdot\cdots\cdot\frac{S_{n-1}}{S_{n}}\right)^{m-1}\right] \\
&=\E\left[\left(\frac{S_{n-1}}{S_{n}}\right)^{n\left(m-1\right)}\prod_{i=1}^{n-2}\left(\frac{S_{i}}{S_{i+1}}\right)^{i\left(m-1\right)}\right] \\
&=\E\left[\left(\frac{S_{n-1}}{S_{n}}\right)^{n\left(m-1\right)}\right] \prod_{i=1}^{n-2}\E\left[\left(\frac{S_{i}}{S_{i+1}}\right)^{i\left(m-1\right)}\right] \\
&=\frac{n-1}{n-1+n(m-1)}\prod_{i=1}^{n-2}\frac{i}{i+i\left(m-1\right)} \\
&=\frac{n-1}{(mn-1)m^{n-2}}.
\end{align*}

Putting it all together we have

\begin{align*}
\E[\text{\# PNE}] &= m^n \Pr[\text{first choices are stable}] \\
&\geq m^n \left(m^{-(n-1)} - \frac{n-1}{(mn-1)m^{n-2}} \right) / n^{m-1} \\
&= m \left(1 - \frac{m(n-1)}{mn-1} \right) / n^{m-1} \\
&= m \left(\frac{m-1}{mn-1} \right) / n^{m-1} \\
&= \frac{m(m-1)}{(mn-1)n^{m-1}}.
\end{align*}
\end{proof}

\subsection{Empirical Results}

We have done several experiments regarding Random Voronoi games, which prove consistent with our theoretical results and
suggest some interesting conjectures, particularly for the 2-D Simultaneous Voronoi Game.  Our discussion of these results
can be found in Appendix C.

\section{Conclusion}
We have introduced a new but we believe important set of variants on
Voronoi games, where each player has a limited number of points to
choose from.  We believe these variations are motivated both by
natural economic settings, and because of the possible connections to
other ``power-of-choice'' processes in which participants choose from a limited set of random options.  

In particular, we note that the Voronoi choice games we propose offer
the chance to consider randomized versions of the problem, where the
set of possible choices for each player is chosen uniformly over of
the space.  We have conjectured that the Random $k$-D Simultaneous
Voronoi Game has a pure Nash equilibrium with probability approaching 1 as $n$ grows,
based on a simulation study.  While this is perhaps the most natural
open question in this setting, there remain several other questions
for both the simultaneous and sequential versions of Voronoi choice
problems, in the worst case and with random point sets.


\begin{thebibliography}{10}

\bibitem{achlioptas2009explosive}
Dimitris Achlioptas, Raissa~M D'Souza, and Joel Spencer.
\newblock Explosive percolation in random networks.
\newblock {\em Science}, 323(5920):1453--1455, 2009.

\bibitem{aumann1974subjectivity}
Robert~J Aumann.
\newblock Subjectivity and correlation in randomized strategies.
\newblock {\em Journal of Mathematical Economics}, 1(1):67--96, 1974.

\bibitem{azar1999balanced}
Yossi Azar, Andrei~Z Broder, Anna~R Karlin, and Eli Upfal.
\newblock Balanced allocations.
\newblock {\em SIAM journal on computing}, 29(1):180--200, 1999.

\bibitem{blum2008regret}
Avrim Blum, MohammadTaghi Hajiaghayi, Katrina Ligett, and Aaron Roth.
\newblock Regret minimization and the price of total anarchy.
\newblock In {\em Proceedings of the Fortieth Annual ACM Symposium on Theory of
  Computing}, pages 373--382. ACM, 2008.

\bibitem{byers2003simple}
John Byers, Jeffrey Considine, and Michael Mitzenmacher.
\newblock Simple load balancing for distributed hash tables.
\newblock In {\em Peer-to-peer Systems II}, pages 80--87. Springer, 2003.

\bibitem{durr2007nash}
Christoph D{\"u}rr and Nguyen~Kim Thang.
\newblock Nash equilibria in voronoi games on graphs.
\newblock In {\em Proceedings of the 5th Annual European Symposium on
  Algorithms}, pages 17--28. Springer, 2007.

\bibitem{fournier2014hotelling}
Ga{\"e}tan Fournier and Marco Scarsini.
\newblock Hotelling games on networks: efficiency of equilibria.
\newblock {\em Available at SSRN 2423345}, 2014.

\bibitem{GT}
J.~J. Gabszewicz and J.-F. Thisse.
\newblock Location.
\newblock In {\em Handbook of Game Theory with Economic Applications}. Volume
  1, Chapter 9. R. Aumann and S. Hart, editors. Elsevier Science Publishers,
  1992.

\bibitem{holst1980lengths}
Lars Holst.
\newblock On the lengths of the pieces of a stick broken at random.
\newblock {\em Journal of Applied Probability}, pages 623--634, 1980.

\bibitem{hotelling1929stability}
Harold Hotelling.
\newblock Stability in competition.
\newblock {\em The Economic Journal}, 39(153):41--57, 1929.

\bibitem{jiang2013polynomial}
Albert~Xin Jiang and Kevin Leyton-Brown.
\newblock Polynomial-time computation of exact correlated equilibrium in
  compact games.
\newblock {\em Games and Economic Behavior}, 21(1-2):183--202, 2013.

\bibitem{kiyomi2011voronoi}
Masashi Kiyomi, Toshiki Saitoh, and Ryuhei Uehara.
\newblock Voronoi game on a path.
\newblock {\em IEICE TRANSACTIONS on Information and Systems},
  94(6):1185--1189, 2011.

\bibitem{mavronicolas2008voronoi}
Marios Mavronicolas, Burkhard Monien, Vicky~G Papadopoulou, and Florian
  Schoppmann.
\newblock Voronoi games on cycle graphs.
\newblock In {\em Proceedings of Mathematical Foundations of Computer Science},
  pages 503--514. Springer, 2008.

\bibitem{richa2000power}
M~Mitzenmacher, Andrea~W Richa, and R~Sitaraman.
\newblock The power of two random choices: A survey of techniques and results.
\newblock In {\em Handbook of Randomized Computing}, pages 255--312, 2000.

\bibitem{nash1951non}
John Nash.
\newblock Non-cooperative games.
\newblock {\em The Annals of Mathematics}, 54(2):286--295, 1951.

\bibitem{nishizeki1988planar}
Takao Nishizeki and Norishige Chiba.
\newblock {\em Planar graphs: Theory and algorithms}.
\newblock Elsevier, 1988.

\bibitem{coreq}
Christos~H Papadimitriou and Tim Roughgarden.
\newblock Computing correlated equilibria in multi-player games.
\newblock {\em Journal of the ACM}, 55(3):14, 2008.

\bibitem{schaefer1978complexity}
Thomas~J Schaefer.
\newblock The complexity of satisfiability problems.
\newblock In {\em Proceedings of the tenth annual ACM symposium on Theory of
  computing}, pages 216--226. ACM, 1978.

\bibitem{teramoto2006voronoi}
Sachio Teramoto, Erik~D Demaine, and Ryuhei Uehara.
\newblock Voronoi game on graphs and its complexity.
\newblock In {\em IEEE Symposium on Computational Intelligence and Games},
  pages 265--271. IEEE, 2006.

\end{thebibliography}

\newpage
\appendix

\section{Some Words on Models} \label{app:models}

In this paper we focus on games where players choose their
strategy simultaneously.  While possibly less interesting game theoretically, 
sequential variations where players arrive one at a time and decide their
choice with knowledge of the preceding choices of previous players are interesting.
For example, we might consider the {\em Random 2-D Sequential Voronoi Problem}:
\begin{itemize}
\item  The $n$ players arrive sequentially.  
Each player has $m$ associated points that are chosen independently and
uniformly at random from the unit torus $[0,1]^2$.  
\item  On arrival, each player chooses a point. The player's final payoff
will be the area of their cell in the Voronoi diagram.  The player's goal
is to choose the point that maximizes (or, alternatively, minimizes) their
final area.
\end{itemize}
In this variation, the primary natural question would be to determine
how each successive player calculates its optimal strategy, and in
particular under what conditions a player should not greedily choose
the point that yields the maximum area at the point of time of their
arrival.  Other natural questions would involve the distribution of
the sizes of the resulting cells (under the greedy strategy or some
other strategy), and how they differ from that of a Voronoi diagram of
$n$ uniform random points.  The ``unfairness'' in this setting might be measured by
the difference in the minimum and maximum cell areas, or the ``second
moment'' $\sum_i X_i^2$ where $X_i$ is the area obtained by the $i$th
player;  one could consider the unfairness compared to the number of choices.

\section{Random Voronoi Games Proofs} \label{sec:randomproofs}

We here show a statement made in proving the second part of Theorem~\ref{thm:maxupperbound}, 
that 
$$
E\left[\prod_{i=1}^n \left(\sum_{j=n-i+1}^n \frac{(j-1)X_j}{j}\right)^{m-1}\right]
\geq \E\left[\prod_{i=1}^n \left(c_i \sum_{j=n-i+1}^n X_j \right)^{m-1}\right],
$$
where $c_i = \frac1i\sum_{j=n-i+1}^n\frac{j-1}j$.
We actually prove a more general statement.  
Denote the product of the components of a vector $\vec{x}\in\mathbb{R}^{m}$
by $\Pi\left(\vec{x}\right)=\prod_{j=1}^{m}x_{j}$.

\begin{lem}
Let $A$ be an $n\times m$ matrix with non-negative entries, let
$s,t\in\left[m\right]$ such that $s\neq t$, and let $\epsilon>0$.
Let $X_{j}\sim Exp\left(1\right)$ i.i.d. for $j\in\left[m\right]$
and let $\vec{X}=\left(X_{1},\ldots,X_{m}\right)$. If $a_{is}\le a_{it}$
for all $i\in\left[n\right]$ then
$$
\E\left[\Pi\left(A\vec{X}\right)\right]\ge\E\left[\Pi\left(B\vec{X}\right)\right],
$$
where the perturbation $B$ is defined by 
$$b_{ij}=\begin{cases}
a_{ij}+\epsilon, & i=1\mbox{ and }j=s\\
a_{ij}-\epsilon, & i=1\mbox{ and }j=t\\
a_{ij}, & \mbox{otherwise.}
\end{cases}$$
\end{lem}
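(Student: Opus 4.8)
The plan is to reduce the inequality to a single non-negative expectation and then establish that non-negativity by symmetrizing over the coordinates $s$ and $t$. Since $B$ coincides with $A$ in every row except the first, and $(B\vec X)_1 = (A\vec X)_1 + \epsilon(X_s - X_t)$ while $(B\vec X)_i = (A\vec X)_i$ for $i\ge2$, expanding the products row by row gives at once
\[
\E\!\left[\Pi(A\vec X)\right] - \E\!\left[\Pi(B\vec X)\right] = \epsilon\, \E\!\left[(X_t - X_s)\prod_{i=2}^{n}(A\vec X)_i\right].
\]
So it suffices to show the expectation on the right is non-negative, and to note that it involves only rows $2,\dots,n$ of $A$, all of whose entries are non-negative by hypothesis. (When $n=1$ the product is empty and this expectation is simply $\E[X_t-X_s]=0$.)

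For the non-negativity, let $\vec X'$ denote $\vec X$ with its $s$th and $t$th coordinates interchanged. Because the $X_j$ are i.i.d., $\vec X' \stackrel{d}{=} \vec X$, so substituting $\vec X'$ and relabeling gives $\E[(X_t-X_s)\prod_{i\ge2}(A\vec X)_i] = \E[(X_s-X_t)\prod_{i\ge2}(A\vec X')_i]$. Averaging the two representations,
\[
2\,\E\!\left[(X_t-X_s)\prod_{i=2}^{n}(A\vec X)_i\right] = \E\!\left[(X_t-X_s)\left(\prod_{i=2}^{n}(A\vec X)_i - \prod_{i=2}^{n}(A\vec X')_i\right)\right],
\]
and I claim the integrand is pointwise non-negative. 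Writing $(A\vec X)_i = L_i + a_{is}X_s + a_{it}X_t$ with $L_i = \sum_{j\neq s,t} a_{ij}X_j \ge 0$, a direct computation gives $(A\vec X)_i - (A\vec X')_i = (a_{it}-a_{is})(X_t-X_s)$, which has the same sign as $X_t-X_s$ since $a_{it}\ge a_{is}$. Hence on the event $X_t \ge X_s$ every factor satisfies $(A\vec X)_i \ge (A\vec X')_i \ge 0$, so $\prod_{i\ge2}(A\vec X)_i \ge \prod_{i\ge2}(A\vec X')_i$; on the event $X_t < X_s$ each factor shrinks but stays non-negative, so the product inequality reverses. In both cases $(X_t-X_s)$ and the difference of products share a sign, the integrand is $\ge 0$, and the lemma follows.

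The only real subtlety is to make sure that the step comparing $\prod_i(A\vec X)_i$ with $\prod_i(A\vec X')_i$ only ever multiplies non-negative factors — which is why the condition $a_{is}\le a_{it}$ is needed for the rows $2,\dots,n$ that survive in the expectation (the condition on row $1$ is not actually used), together with the non-negativity of the entries of $A$. Note also that $\epsilon$ need not be small and $b_{1t}$ may be negative, since row $1$ has been removed from the relevant expectation before any sign analysis. Beyond spotting the symmetrization there is no significant obstacle; the application to Theorem~\ref{thm:maxupperbound} then follows by iterating this single-transfer inequality — relabeling rows, which is harmless since $\Pi$ is symmetric in them — along a sequence of ``Robin Hood'' transfers that turns each coefficient block $\left(\tfrac{j-1}{j}\right)_j$ into its average $c_i$.
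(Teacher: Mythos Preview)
Your proof is correct and considerably cleaner than the paper's. Both arguments begin with the same reduction
\[
\E[\Pi(A\vec X)]-\E[\Pi(B\vec X)]=\epsilon\,\E\Bigl[(X_t-X_s)\prod_{i\ge 2}(A\vec X)_i\Bigr],
\]
but then diverge. The paper expands $\prod_{i\ge 2}(A\vec X)_i$ into $m^{n-1}$ monomials indexed by functions $f\colon\{2,\dots,n\}\to[m]$, evaluates each expectation via the factorial moments of the exponential, and shows $\sum_{f(1)=t}P_f\ge\sum_{f(1)=s}P_f$ by grouping terms four at a time using a monotone injection $\binom{[n]}{k}\hookrightarrow\binom{[n]}{n-k}$ supplied by Hall's theorem. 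Your symmetrization $X_s\leftrightarrow X_t$ bypasses all of this combinatorics: it produces a pointwise non-negative integrand directly, uses only that the $X_j$ are exchangeable and non-negative (the exponential law is never invoked), and makes transparent your observation that the hypothesis $a_{1s}\le a_{1t}$ on the perturbed row is not actually needed. The paper's route gains nothing additional here; your argument is both shorter and strictly more general.
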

Repeatedly applying this allows us to average all non-zero entries in each row.

The intuition of the proof is rather simple; if we expand the products
out and use linearity of expectation, we can match up the terms in a
way that coordinates the powers of $X_s$ and $X_t$, controlling their moments. Because of this, on a term by
term basis, shifting weight from $a_{1t}$ to $a_{1s}$ can only reduce the value.

The following mechanics are somewhat unwieldy but formalize this idea.

\begin{proof}
We can write $\Pi\left(A\vec{X}\right)$ as a sum of $m^{n}$ terms
as follows:
$$
\Pi\left(A\vec{X}\right)=\sum_{f:\left[n\right]\to\left[m\right]}\prod_{i=1}^{n}a_{if\left(i\right)}X_{f\left(i\right)}
$$

Denote the number of preimages of $j\in\left[m\right]$ in a function
$f:\left[n\right]\to\left[m\right]$ by $\sigma_{f}\left(j\right)=\left|f^{-1}\left(j\right)\right|$.
By linearity of expectation and independence we get that
\begin{align*}
\E\left[\Pi\left(A\vec{X}\right)\right]
&=\sum_{f:[n]\to[m]}\E\left[\prod_{i=1}^n a_{if(i)}X_{f(i)}\right] \\
&=\sum_{f:[n]\to[m]}\prod_{i=1}^n a_{if(i))}\prod_{j=1}^m\E\left[X_j^{\sigma_f(j)}\right] \\ 
&=\sum_{f:[n]\to[m]}\left(\prod_{i=1}^n a_{if(i)}\prod_{j=1}^m\sigma_f(j)!\right).
\end{align*}
For notational convenience write
$$
P_{f}=\prod_{i=2}^{n}a_{if\left(i\right)}\prod_{j=1}^{m}\sigma_{f}\left(j\right)!
$$
and then we can express the expectation of $\Pi\left(B\vec{X}\right)$ as:
$$
\E\left[\Pi\left(B\vec{X}\right)\right]=\sum_{f:\left[n\right]\to\left[m\right]}\left(\prod_{i=1}^{n}b_{if\left(i\right)}\prod_{j=1}^{m}\sigma_{f}\left(j\right)!\right)=\sum_{f:\left[n\right]\to\left[m\right]}b_{1f\left(1\right)}P_{f}
$$
whereas their difference is:
\begin{align*}
\E\left[\Pi\left(A\vec{X}\right)-\Pi\left(B\vec{X}\right)\right] 
&=\sum_{f:\left[n\right]\to\left[m\right]}\left(a_{1f\left(1\right)}-b_{1f\left(1\right)}\right)P_f \\
&=\epsilon\left(\sum_{f\left(1\right)=t}P_{f}-\sum_{f\left(1\right)=s}P_{f}\right),
\end{align*}
so it suffices to show that $\sum_{f\left(1\right)=t}P_{f}\ge\sum_{f\left(1\right)=s}P_{f}$.

We now partition the sum to groups $\left\{ f,f',g,g'\right\} $ of
size four as follows. Take $f:\left[n\right]\to\left[m\right]$ such
that $f\left(1\right)=t$ and $\sigma_{f}\left(t\right)\le\sigma_{f}\left(s\right)+1$.
Let $S=f^{-1}\left(s\right)$ and $T=f^{-1}\left(t\right)\setminus\left\{ 1\right\} $.
Use the bijection from Lemma~\ref{lem:monotone-bijection} to map
$T\subset S\cup T$ to some $T'=\phi\left(T\right)\subset S\cup T$
of size $\sigma_{f}\left(s\right)$ such that $T\subset T'$. Define
$f',g,g':\left[n\right]\to\left[m\right]$ by
$$
f'\left(i\right)=\begin{cases}
t, & i\in T';\\
f\left(i\right), & \mbox{otherwise}
\end{cases}
$$
$$
g\left(i\right)=\begin{cases}
s, & i=1;\\
f\left(i\right), & \mbox{otherwise}
\end{cases}
$$
$$
g'\left(i\right)=\begin{cases}
s, & i=1;\\
g\left(i\right), & \mbox{otherwise}.
\end{cases}
$$
Observe that $\sigma_{g}\left(s\right)=\sigma_{f'}\left(t\right)=\sigma_{f}\left(s\right)+1$,
$\sigma_{g}\left(t\right)=\sigma_{f'}\left(s\right)=\sigma_{f}\left(t\right)-1$,
$\sigma_{g'}\left(s\right)=\sigma_{f'}\left(s\right)+1=\sigma_{f}\left(t\right)$
and $\sigma_{g'}\left(s\right)=\sigma_{f'}\left(s\right)+1=\sigma_{f}\left(t\right)$.

We claim that $P_{f}+P_{f'}\ge P_{g}+P_{g'}$. To see that, set aside
the common non-negative factor
$$
C=\prod_{i\in\left[n\right]\setminus\left(S\cup T\cup\left\{ 1\right\} \right)}a_{if\left(i\right)}\prod_{j=1}^{m}\sigma_{f}\left(j\right)!
$$
and observe that
\begin{eqnarray*}
P_{f} & = & C\cdot\prod_{i\in S}a_{is}\prod_{i\in T}a_{it}\\
P_{g'} & = & C\cdot\prod_{i\in S\setminus T'}a_{is}\prod_{i\in T'}a_{it}\\
P_{g} & = & C\cdot\frac{\sigma_{f}\left(s\right)+1}{\sigma_{f}\left(t\right)}\cdot\prod_{i\in S}a_{is}\prod_{i\in T}a_{it}\\
P_{f'} & = & C\cdot\frac{\sigma_{f}\left(s\right)+1}{\sigma_{f}\left(t\right)}\cdot\prod_{i\in S\setminus T'}a_{is}\prod_{i\in T'}a_{it}.
\end{eqnarray*}
Therefore
\begin{align*}
(&P_{f'}-P_{g})-(P_{g'}-P_{f}) \\
&=C\left(\frac{\sigma_{f}\left(s\right)+1}{\sigma_{f}\left(t\right)}-1\right)\left(\prod_{i\in S\setminus T'}a_{is}\prod_{i\in T'}a_{it}-\prod_{i\in S}a_{is}\prod_{i\in T}a_{it}\right) \\
&\ge0
\end{align*}
since $\sigma_{f}\left(t\right)\le\sigma_{f}\left(s\right)+1$ and
the $s$th column of $A$ is dominated by the $t$th column.

Note that when $\sigma_{f}\left(t\right)=\sigma_{f}\left(s\right)+1$
we get $T'=T$ and thus $f'=f$,$g'=g$; in this case $P_{f}=P_{g}$
so no harm is done by counting it twice.
\end{proof}

\begin{lem}
\label{lem:monotone-bijection}Let $n$ and $k$ be integers such
that $0<2k\le n$. Then there is a monotone bijection $\phi:\binom{\left[n\right]}{k}\to\binom{\left[n\right]}{n-k}$;
that is, $\phi$ maps $k$-subsets of $\left[n\right]$ to $\left(n-k\right)$-subsets
of $\left[n\right]$ and satisfies $A\subset\phi\left(A\right)$.\end{lem}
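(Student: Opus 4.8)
The plan is to exhibit $\phi$ as a perfect matching of a bipartite ``containment'' graph and to obtain that matching from the standard fact that every nonempty regular bipartite graph has a perfect matching.

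First I would set up the bipartite graph $H$ whose two vertex classes are $\binom{[n]}{k}$ and $\binom{[n]}{n-k}$, joining $A$ to $B$ precisely when $A\subseteq B$. Since $\left|\binom{[n]}{k}\right|=\left|\binom{[n]}{n-k}\right|=\binom{n}{k}$, a perfect matching of $H$ is exactly the same thing as a bijection $\phi:\binom{[n]}{k}\to\binom{[n]}{n-k}$, and every matching edge $\{A,B\}$ automatically satisfies $A\subseteq B=\phi(A)$. So it suffices to produce a perfect matching of $H$.

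Next I would check that $H$ is regular. A fixed $k$-set $A$ lies below an $(n-k)$-set $B$ iff $B\setminus A$ is an $(n-2k)$-subset of the $(n-k)$-element set $[n]\setminus A$, so $A$ has degree $\binom{n-k}{n-2k}=\binom{n-k}{k}$; symmetrically, a fixed $(n-k)$-set contains exactly $\binom{n-k}{k}$ of the $k$-sets. Thus $H$ is $d$-regular with $d=\binom{n-k}{k}$, and the hypothesis $2k\le n$ is precisely what makes $d\ge1$. A nonempty $d$-regular bipartite graph satisfies Hall's condition (the $d|S|$ edges leaving any set $S$ on one side meet at least $|S|$ vertices, each of degree $d$), so it has a perfect matching; equivalently one may cite König's theorem or Birkhoff--von Neumann. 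Any such matching is the desired $\phi$.

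I do not expect a genuine obstacle here; the only points that need care are the degree computation — so that the regularity of $H$ is visible — and the observation that the common degree is positive exactly under the stated hypothesis $2k\le n$. If an explicit $\phi$ were wanted instead, the parenthesis-matching construction underlying the de Bruijn--Tengbergen--Kruyswijk symmetric chain decomposition of the Boolean lattice $2^{[n]}$ gives one: since $k\le n-k$, every symmetric chain that meets level $k$ also meets level $n-k$, and sending the level-$k$ member of each such chain to its level-$(n-k)$ member is monotone and bijective; there the work would shift to verifying that the parenthesis matching does decompose $2^{[n]}$ into such chains.
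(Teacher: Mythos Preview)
Your proposal is correct and follows essentially the same argument as the paper: build the bipartite containment graph between $\binom{[n]}{k}$ and $\binom{[n]}{n-k}$, observe it is $\binom{n-k}{k}$-regular, and invoke Hall's theorem to extract a perfect matching. Your write-up is in fact more detailed than the paper's (you verify the degree count and note that $2k\le n$ is exactly what makes the degree positive), and your aside about the symmetric-chain-decomposition alternative is a nice bonus, but the core route is identical.
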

\begin{proof}
Build a bipartite graph whose sides are $L=\binom{\left[n\right]}{k}$
and $R=\binom{\left[n\right]}{n-k}$. Connect $A\in L$ to $B\in R$
iff $A\subset B$. This is a $\binom{n-k}{k}$-regular bipartite graph,
and as such it has a perfect matching (e.g., by Hall's theorem).
\end{proof}

\bigskip

We now turn to the actual calculation of the product of the terms
$c_i = \frac{1}{i} \sum_{j=n-i+1}^n \frac{j-1}{j}$. 
Denote the $i$th Harmonic and the $i$th 2-Harmonic numbers, respectively, by $H_i = 1+\frac12+\frac13+\cdots+\frac1i$ and $H_i^{(2)}=1+\frac1{2^2} + \frac1{3^2}+\cdots+\frac1{i^2}$.
Recall that the Harmonic series diverges but the 2-Harmonic series converges to $\pi^{2}/6$.

Now write $c_i = 1- \frac1i\left(\frac1n+\frac1{n-1}+\cdots+\frac1{n-i+1}\right)=1+\left(H_{n-i}-H_n\right)/i$.
\begin{lem}
$c_{1}+c_{2}+\cdots+c_{n}=n-H_{n}^{\left(2\right)}$ for all $n\in\mathbb{N}$.
\end{lem}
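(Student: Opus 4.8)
The plan is to prove the equivalent identity $\sum_{i=1}^{n}(1-c_{i})=H_{n}^{(2)}$ by induction on $n$. Since $\frac{j-1}{j}=1-\frac1j$, we have $\sum_{j=n-i+1}^{n}\frac{j-1}{j}=i-(H_{n}-H_{n-i})$, so $1-c_{i}=\frac{H_{n}-H_{n-i}}{i}$ and the target becomes
\[
S_{n}:=\sum_{i=1}^{n}\frac{H_{n}-H_{n-i}}{i}=\sum_{k=1}^{n}\frac1{k^{2}}.
\]
The base case is immediate ($S_{0}=0$ with empty sums, or $S_{1}=H_{1}-H_{0}=1=H_{1}^{(2)}$), so essentially all the work lies in the inductive step.

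For the inductive step I would compare $S_{n+1}$ with $S_{n}$ summand by summand. For each $i$ with $1\le i\le n$, the $i$th summand changes from $\frac{H_{n}-H_{n-i}}{i}$ to $\frac{H_{n+1}-H_{n+1-i}}{i}$; using $H_{n+1}-H_{n}=\frac1{n+1}$ and $H_{n+1-i}-H_{n-i}=\frac1{n+1-i}$, this increment equals $\frac1{i(n+1)}-\frac1{i(n+1-i)}$. The only genuinely new summand is the one at $i=n+1$, which equals $\frac{H_{n+1}-H_{0}}{n+1}=\frac{H_{n+1}}{n+1}$. Hence
\[
S_{n+1}-S_{n}=\frac{H_{n+1}}{n+1}+\frac{H_{n}}{n+1}-\sum_{i=1}^{n}\frac1{i(n+1-i)}.
\]
Now I invoke the standard partial-fraction identity $\frac1{i(n+1-i)}=\frac1{n+1}\bigl(\frac1i+\frac1{n+1-i}\bigr)$, which gives $\sum_{i=1}^{n}\frac1{i(n+1-i)}=\frac{2H_{n}}{n+1}$. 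Substituting, $S_{n+1}-S_{n}=\frac{H_{n+1}-H_{n}}{n+1}=\frac1{(n+1)^{2}}$, so $S_{n+1}=H_{n}^{(2)}+\frac1{(n+1)^{2}}=H_{n+1}^{(2)}$, completing the induction and hence the lemma.

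I do not expect a genuine obstacle here: the only points requiring care are the reindexing of the harmonic numbers ($H_{n+1-i}$ versus $H_{n-i}$) together with the isolation of the $i=n+1$ boundary term, and the partial-fraction evaluation of $\sum_{i}1/(i(n+1-i))$, both routine. As a remark, one can also rewrite $\sum_{i=1}^{n}(1-c_{i})=\sum_{1\le i,j\le n,\;i+j>n}\frac1{ij}$ by expanding $H_{n}-H_{n-i}=\sum_{j=n-i+1}^{n}\frac1j$ and swapping the order of summation; this double sum is symmetric in $i$ and $j$, which is suggestive but by itself does not evaluate it, so the inductive (equivalently, a telescoping) argument is what actually closes the proof.
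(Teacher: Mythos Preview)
Your proof is correct, and it takes a genuinely different route from the paper's. The paper argues directly (no induction): it first establishes the classical identity $H_n^2 + H_n^{(2)} = 2\sum_{k=1}^n H_k/k$, then shows by swapping a double sum that $\sum_{i=1}^n \frac{H_{n-i}-(H_n-H_i)}{i} = H_n^2 - \sum_{k=1}^n H_k/k$, and combines these to get $\sum_i c_i = n + H_n^2 - 2\sum_k H_k/k = n - H_n^{(2)}$. Your induction avoids this auxiliary identity entirely, replacing it with the single partial-fraction evaluation $\sum_{i=1}^n \frac{1}{i(n+1-i)} = \frac{2H_n}{n+1}$, which is arguably more elementary and self-contained. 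The paper's approach, on the other hand, is non-recursive and ties the result to a recognizable harmonic-number identity; your closing remark that $\sum_i(1-c_i)=\sum_{i+j>n}\frac{1}{ij}$ is in fact exactly the double-sum viewpoint the paper exploits, so the two arguments are closer in spirit than they may first appear.
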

\begin{proof}
First we claim that $H_n^2+H_n^{(2)}=2\sum_{k=1}^nH_k/k$.
Indeed,
\begin{align*}
H_n^2
&=\left(1+\frac{1}{2}+\cdots+\frac{1}{n}\right)^2
=\sum_{j,k=1}^{n}\frac{1}{jk}
=2\sum_{1\le j\le k\le n}\frac{1}{jk}-\sum_{1\le j=k\le n}\frac{1}{jk}\\
&=2\sum_{k=1}^{n}\frac{1}{k}\sum_{j=1}^{k}\frac{1}{j}-\sum_{k=1}^{n}\frac{1}{k^{2}}
=2\sum_{k=1}^{n}\frac{H_{k}}{k}-H_{n}^{\left(2\right)}.
\end{align*}

Now for any $i\in\left[n\right]$ we have
$$
\frac{H_{n-i}-\left(H_{n}-H_{i}\right)}i
=\frac1i\sum_{k=1}^{n-i}\left(\frac1k-\frac1{i+k}\right)
=\frac1i\sum_{k=1}^{n-i}\frac{(i+k)-k}{k\left(i+k\right)}
=\sum_{k=1}^{n-i}\frac{1}{k\left(i+k\right)},
$$
so
\begin{align*}
\sum_{i=1}^{n}\frac{H_{n-i}-\left(H_{n}-H_{i}\right)}{i}
&=\sum_{i=1}^{n}\sum_{k=1}^{n-i}\frac{1}{k\left(i+k\right)}
 =\sum_{k=1}^{n}\sum_{i=1}^{n-k}\frac{1}{k\left(i+k\right)}\\
&=\sum_{k=1}^{n}\sum_{j=k+1}^{n}\frac{1}{kj}
 =\sum_{k=1}^{n}\frac{H_{n}-H_{k}}{k}\\
&=H_n^2-\sum_{k=1}^{n}\frac{H_{k}}{k},
\end{align*}
and thus
\begin{align*}
\sum_{i=1}^nc_i
&=\sum_{i=1}^n\left(1+\frac{H_{n-i}-H_n}i\right)
=n+\sum_{i=1}^n\frac{H_{n-i}-(H_n-H_i)-H_i}i\\
&=n+H_n^2-2\sum_{i=1}^n\frac{H_i}i
=n-H_n^{\left(2\right)}.
\end{align*}
\end{proof}

\begin{corollary}
$\prod_{i=1}^{n} c_i \ge\exp\left(-\left(\pi^{2}/6+o\left(1\right)\right)\right)$;
in particular, $\prod_{i=1}^{n}c_i\ge0.19$.
\end{corollary}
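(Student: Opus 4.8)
The plan is to take logarithms and build on the exact identity $\sum_{i=1}^{n}c_{i}=n-H_{n}^{(2)}$ just proved. Write $d_{i}=1-c_{i}$; since each summand $\tfrac{j-1}{j}$ in the definition of $c_{i}$ lies in $[0,1)$, we have $0\le d_{i}$, and the identity becomes $\sum_{i=1}^{n}d_{i}=H_{n}^{(2)}$. Granting for the moment that each $d_{i}<1$ (which the next paragraph gives, and more), we may write $-\ln\prod_{i=1}^{n}c_{i}=\sum_{i=1}^{n}\bigl(-\ln(1-d_{i})\bigr)$, so it suffices to show this sum is $\pi^{2}/6+o(1)$.

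The crucial point is that $\max_{i}d_{i}\to 0$ as $n\to\infty$, even though $\sum_{i}d_{i}$ stays near $\pi^{2}/6$ — no single term carries constant weight. To see this, estimate $d_{i}=\tfrac1i(H_{n}-H_{n-i})$ in two regimes. For $i\le\sqrt n$, $d_{i}=\tfrac1i\sum_{k=n-i+1}^{n}\tfrac1k\le\tfrac1i\cdot i\cdot\tfrac1{n-i+1}\le\tfrac1{n-\sqrt n+1}$. For $i>\sqrt n$, using $H_{n-i}\ge 0$ we get $d_{i}\le H_{n}/i\le H_{n}/\sqrt n$. Both bounds are $O(\log n/\sqrt n)=o(1)$ (a more careful version of the same argument gives $\max_{i}d_{i}=O(\log n/n)$, but we will not need it).

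Now invoke the elementary inequality $-\ln(1-x)\le x+x^{2}$, valid for $0\le x\le 2/3$ (the function $x+x^{2}+\ln(1-x)$ is $0$ at $x=0$, has derivative $x(1-2x)/(1-x)$, hence rises on $(0,1/2)$, falls on $(1/2,1)$, and is still positive at $x=2/3$). Once $n$ is large enough that $\max_{i}d_{i}\le 2/3$, this gives
\[
-\ln\prod_{i=1}^{n}c_{i}=\sum_{i=1}^{n}\bigl(-\ln(1-d_{i})\bigr)\le\sum_{i=1}^{n}d_{i}+\sum_{i=1}^{n}d_{i}^{2}\le H_{n}^{(2)}+\Bigl(\max_{j}d_{j}\Bigr)\sum_{i=1}^{n}d_{i}=H_{n}^{(2)}\Bigl(1+\max_{j}d_{j}\Bigr).
\]
Since $H_{n}^{(2)}\to\pi^{2}/6$ (from below) and $\max_{j}d_{j}\to 0$, the right-hand side is $\pi^{2}/6+o(1)$, so $\prod_{i=1}^{n}c_{i}\ge\exp\bigl(-(\pi^{2}/6+o(1))\bigr)$. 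Finally $e^{-\pi^{2}/6}=0.1930\ldots>0.19$ (equivalently $\pi^{2}/6=1.6449\ldots<1.6607\ldots=-\ln 0.19$), so there is slack of about $0.016$ to absorb the $o(1)$, and for all sufficiently large $n$ we conclude $\prod_{i=1}^{n}c_{i}\ge 0.19$.

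The only step that is not mechanical is the observation in the second paragraph that $\max_{i}d_{i}$ vanishes: the tempting worry is the terms with $i$ near $n$, where at first sight $d_{i}=(H_{n}-H_{n-i})/i$ could look bounded away from $0$, but the trivial bound $d_{i}\le H_{n}/i$ controls precisely that regime. The remaining ingredients — the identity $\sum c_{i}=n-H_{n}^{(2)}$, the inequality for $-\ln(1-x)$, the convergence $H_{n}^{(2)}\to\pi^{2}/6$, and the numerical check $e^{-\pi^{2}/6}>0.19$ — are all routine.
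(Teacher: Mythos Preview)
Your argument is correct and reaches the same asymptotic conclusion, but it takes a slightly different route from the paper's proof. The paper uses the inequality $\ln x\ge (x-1)/x$ (equivalently $-\ln(1-d)\le d/(1-d)$) together with the monotonicity $c_1\ge c_2\ge\cdots\ge c_n$, which lets it replace every $1/c_i$ by $1/c_n$ and obtain the closed form
\[
\ln\prod_{i=1}^n c_i\;\ge\;\frac{-H_n^{(2)}}{c_n}\;=\;\frac{-H_n^{(2)}}{1-H_n/n}.
\]
You instead use $-\ln(1-d)\le d+d^2$ and control $\sum d_i^2$ by $(\max_j d_j)\sum_i d_i$, bounding $\max_j d_j$ via a two-regime split. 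Both approaches exploit the identity $\sum d_i=H_n^{(2)}$ and the fact that no single $d_i$ carries constant mass; the paper's monotonicity observation simply identifies $\max_j d_j=d_n=H_n/n$ in one stroke (giving $O(\log n/n)$ immediately), whereas your split only yields $O(\log n/\sqrt n)$ without that observation. Either way the error term is $o(1)$, so the asymptotic bound follows. As for the ``in particular $\ge 0.19$'' clause, your conclusion ``for all sufficiently large $n$'' matches what the paper's own proof actually establishes; neither argument gives the inequality for every $n$.
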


\begin{proof}
Use $\ln(x)\ge (x-1)/x$ for $0<x<1$ along with the fact that
$$1-\frac1n=c_1\ge c_2\ge\cdots\ge c_{n}=1-\frac{H_n}n$$
to establish that
\begin{align*}
\ln\left(\prod_{i=1}^{n}c_i\right)
&=\sum_{i=1}^n\ln\left(c_i\right)
\ge\sum_{i=1}^n\frac{c_i-1}{c_i}
\ge\sum_{i=1}^n\frac{c_i-1}{c_n}
=\frac1{c_n}\left(-n+\sum_{i=1}^n c_i\right)\\
&=\frac{-H_{n}^{\left(2\right)}}{1-H_{n}/n}\ge-\left(\frac{\pi^{2}}{6}+O\left(\frac{1}{n}\right)\right).
\end{align*}
\end{proof}

\section{Empirical Results} \label{app:empirical}
We describe our empirical results for finding PNE in random instances of Voronoi choice games.  Specifically, we investigated the likelihood that a myopic best response process would converge to a PNE on random instances of the One Way 1-D Simultaneous Voronoi Game, the 1-D Simultaneous Voronoi Game and the 2-D Simultaneous Voronoi Game.  We investigated both the maximization and minimization variants of each of these games with $n \in \{10, 100, 1000, 10000\}$ and $m \in \{2,3,4\}$.  (For the 2-D case we present results only for $n$ up to 1000.)
We note that we present limited results for the 1-D Simultaneous Voronoi Game, as we were always able to find a PNE.  This is not surprising, given Theorem~\ref{thm:simple}.  
However, it offers some weight suggesting that the worst-case time for myopic best response to converge to an equilibrium for this problem is polynomial in $n$;  this remains
an interesting open question.  

Our experimental procedure is as follows.  For each type of game, for
each value of $n$ and $m$, we generated 1000 random instances of the
game by choosing each player's points uniformly at random from the
unit torus.  For each instance, we first randomly assign each player
to one of its points.  We then repeatedly iterate through all of the
players, moving them to their best response to the current set of
points.  If in a pass through the players, none of them change
strategies, we have found a PNE.  Otherwise, we give up after 1000 such
passes.  (The number 1000 appears suitable; in our experiments, no
instances successfully converged after more than 764 passes).  If no
PNE was found, using the same set of random points, we try a new random
starting assignment of points to players, and then try another 1000
passes through the players.  If after 100 such random initial
configurations no PNE has been found, we declare that we have failed
to find a PNE for that problem instance.  (For the 2-D Simultaneous Voronoi
Game, our results are only up to 10 initial configurations;  as we shall see this
appears sufficient for our exploration.)

The program for the 2-D Simultaneous Voronoi problem was built on top
of the Delaunay triangulation program in the Computational Geometry
Algorithms Library,\footnote{\url{http://www.cgal.org/}} which
supports fast dynamic updates of a Delaunay triangulation.  By
maintaining this underlying structure, our program was able to
efficiently update the Voronoi cell areas during the myopic best
response process.

Our results are presented in Table~\ref{tab:1way1dmin}
through Table~\ref{tab:2ditrs}.  Table~\ref{tab:1way1dmin}
through Table~\ref{tab:2dmax} show in how many instances out of 1000
random instances our algorithm found a PNE, for various values
of $n$, $m$ and numbers of random initial configurations
attempted.  Table~\ref{tab:1ditrs} and Table~\ref{tab:2ditrs} give the maximum number of passes of
myopic best response that our algorithm took before successfully finding a PNE for the variants of the 1-D Simultaneous Voronoi
Game and  2-D Simultaneous Voronoi
Game respectively.

Based on this data, we propose a few conjectures.  Looking
at Table~\ref{tab:1way1dmax}, it seems possible that the probability
that a PNE exists for a random instance of the maximization variant of
the One Way 1-D Simultaneous Voronoi Game converges as $n$ grows to a constant in
$(0,1)$ for $m = 2$.  For $m \geq 3$ for the maximization variant, and for $m \geq 2$ for the minimization variant, 
Table~\ref{tab:1way1dmin} and Table~\ref{tab:1way1dmax} suggest that the probability
of a PNE exists may converge to 0, or may be difficult to find through myopic 
best response.  Both of these possibilities are consistent with our results
on the expected number of solutions.  

For the 2-D Simultaneous Voronoi Game, for both the maximization and the 
minimization version, Table~\ref{tab:2dmin} and Table~\ref{tab:2dmax} suggest
that a PNE exists for a random instance with high probability, and further
can be found by myopic best response.   

Interestingly, the expected number
of passes before finding a PNE in our experiments seems to grow like 
$\Theta(\log n)$ for the maximization variant of both the 1-D Simultaneous Voronoi Game and the 2-D Simultaneous Voronoi Game,
but seems to possibly grow like $\Theta(n)$ (or possible some other superlogarithmic
function) for the minimization variants.
Again, we are left with many interesting open questions. 

\eat{

\begin{table}
	\centering
		\begin{tabular}{| c | c | c | c | c | c | c |}
  \hline                       
  $m$ & $n$ & 1 attempt & 2 attempts & 5 attempts & 10 attempts & 100 attempts \\
  \hline 
  \hline
  2 & 10 & 1000 & 1000 & 1000 & 1000 & 1000 \\
  \hline
  2 & 100 & 1000 & 1000 & 1000 & 1000 & 1000 \\
  \hline
  2 & 1000 & 1000 & 1000 & 1000 & 1000 & 1000 \\
  \hline
  2 & 10000 & 1000 & 1000 & 1000 & 1000 & 1000 \\
  \hline
  3 & 10 & 1000 & 1000 & 1000 & 1000 & 1000 \\
  \hline
  3 & 100 & 1000 & 1000 & 1000 & 1000 & 1000 \\
  \hline
  3 & 1000 & 1000 & 1000 & 1000 & 1000 & 1000 \\
  \hline
  3 & 10000 & 1000 & 1000 & 1000 & 1000 & 1000 \\
  \hline
  4 & 10 & 1000 & 1000 & 1000 & 1000 & 1000 \\
  \hline
  4 & 100 & 1000 & 1000 & 1000 & 1000 & 1000 \\
  \hline
  4 & 1000 & 1000 & 1000 & 1000 & 1000 & 1000 \\
  \hline
  4 & 10000 & 1000 & 1000 & 1000 & 1000 & 1000 \\
  \hline
\end{tabular}
	\caption{The number of successes over 1000 iterations of finding a PNE for a random instance of the minimization variant of the 1-D Simultaneous Voronoi Game using a given number of attempts at myopic best response with random initial strategy profiles.}
	\label{tab:1dmin}
\end{table}

\begin{table}
	\centering
		\begin{tabular}{| c | c | c | c | c | c | c |}
  \hline                       
  $m$ & $n$ & 1 attempt & 2 attempts & 5 attempts & 10 attempts & 100 attempts \\
  \hline 
  \hline
  2 & 10 & 1000 & 1000 & 1000 & 1000 & 1000 \\
  \hline
  2 & 100 & 1000 & 1000 & 1000 & 1000 & 1000 \\
  \hline
  2 & 1000 & 1000 & 1000 & 1000 & 1000 & 1000 \\
  \hline
  2 & 10000 & 1000 & 1000 & 1000 & 1000 & 1000 \\
  \hline
  3 & 10 & 1000 & 1000 & 1000 & 1000 & 1000 \\
  \hline
  3 & 100 & 1000 & 1000 & 1000 & 1000 & 1000 \\
  \hline
  3 & 1000 & 1000 & 1000 & 1000 & 1000 & 1000 \\
  \hline
  3 & 10000 & 1000 & 1000 & 1000 & 1000 & 1000 \\
  \hline
  4 & 10 & 1000 & 1000 & 1000 & 1000 & 1000 \\
  \hline
  4 & 100 & 1000 & 1000 & 1000 & 1000 & 1000 \\
  \hline
  4 & 1000 & 1000 & 1000 & 1000 & 1000 & 1000 \\
  \hline
  4 & 10000 & 1000 & 1000 & 1000 & 1000 & 1000 \\
  \hline
\end{tabular}
	\caption{The number of successes over 1000 iterations of finding a PNE for a random instance of the maximization variant of the 1-D Simultaneous Voronoi Game using a given number of attempts at myopic best response with random initial strategy profiles.}
	\label{tab:1dmax}
\end{table}
}

\begin{table}
	\centering
		\begin{tabular}{| c | c | c | c | c | c | c |}
  \hline                       
  $m$ & $n$ & 1 attempt & 2 attempts & 5 attempts & 10 attempts & 100 attempts \\
  \hline 
  \hline
  2 & 10 & 216 & 222 & 231 & 239 & 242 \\
  \hline
  2 & 100 & 17 & 17 & 18 & 21 & 24 \\
  \hline
  2 & 1000 & 1 & 1 & 1 & 1 & 1 \\
  \hline
  2 & 10000 & 0 & 0 & 0 & 0 & 0 \\
  \hline
  3 & 10 & 38 & 40 & 45 & 46 & 46 \\
  \hline
  3 & 100 & 0 & 0 & 0 & 0 & 0 \\
  \hline
  3 & 1000 & 0 & 0 & 0 & 0 & 0 \\
  \hline
  3 & 10000 & 0 & 0 & 0 & 0 & 0 \\
  \hline
  4 & 10 & 6 & 6 & 6 & 6 & 6 \\
  \hline
  4 & 100 & 0 & 0 & 0 & 0 & 0 \\
  \hline
  4 & 1000 & 0 & 0 &  0&  0& 0 \\
  \hline
  4 & 10000 &0  &0  & 0 & 0 & 0 \\
  \hline
\end{tabular}
	\caption{The number of successes over 1000 iterations of finding a PNE for a random instance of the minimization variant of the One Way 1-D Simultaneous Voronoi Game using a given number of attempts at myopic best response with random initial strategy profiles.}
	\label{tab:1way1dmin}
\end{table}

\begin{table}
	\centering
		\begin{tabular}{| c | c | c | c | c | c | c |}
  \hline                       
  $m$ & $n$ & 1 attempt & 2 attempts & 5 attempts & 10 attempts & 100 attempts \\
  \hline 
  \hline
  2 & 10 & 616 & 633 & 652 & 663 & 668 \\
  \hline
  2 & 100 & 531 & 556 & 578 & 588 & 609 \\
  \hline
  2 & 1000 & 497 & 525 & 547 & 556 & 589 \\
  \hline
  2 & 10000 & 497 & 518 & 548 & 563 & 592 \\
  \hline
  3 & 10 & 175 & 197 & 223 & 240 & 285 \\
  \hline
  3 & 100 & 9 & 13 & 19 & 26 & 51 \\
  \hline
  3 & 1000 & 0 & 0 & 0 & 0 & 0 \\
  \hline
  3 & 10000 & 0 & 0 & 0 & 0 & 0 \\
  \hline
  4 & 10 & 35 & 44 & 56 & 65 & 112 \\
  \hline
  4 & 100 & 0 &0  & 0 &  0& 0 \\
  \hline
  4 & 1000 & 0 & 0 &0  & 0 & 0 \\
  \hline
  4 & 10000 & 0 & 0 &0  & 0 & 0 \\
  \hline
\end{tabular}
	\caption{The number of successes over 1000 iterations of finding a PNE for a random instance of the maximization variant of the One Way 1-D Simultaneous Voronoi Game using a given number of attempts at myopic best response with random initial strategy profiles.}
	\label{tab:1way1dmax}
\end{table}


\begin{table}
	\centering
		\begin{tabular}{| c | c | c | c | c | c |}
  \hline                       
  $m$ & $n$ & 1 attempt & 2 attempts & 5 attempts & 10 attempts \\
  \hline 
  \hline
  2 & 10 & 987 & 995 & 998 & 999 \\
  \hline
  2 & 100 & 982 & 994 & 997 & 999\\
  \hline
  2 & 1000 & 978 & 999 & 1000 & 1000\\
  \hline
  3 & 10 & 986 & 988 & 995 & 996\\
  \hline
  3 & 100 & 983 & 995 & 1000 & 1000\\
  \hline
  3 & 1000 & 967 & 1000 & 1000 & 1000\\
  \hline
  4 & 10 & 970 & 993 & 998 & 1000\\
  \hline
  4 & 100 & 969 & 1000 & 1000 & 1000\\
  \hline
  4 & 1000 & 948 & 998 & 1000 & 1000\\
  \hline
\end{tabular}
	\caption{The number of successes over 1000 iterations of finding a PNE for a random instance of the minimization variant of the 2-D Simultaneous Voronoi Game using a given number of attempts at myopic best response with random initial strategy profiles.}
	\label{tab:2dmin}
\end{table}

\begin{table}
	\centering
		\begin{tabular}{| c | c | c | c | c | c |}
  \hline                       
  $m$ & $n$ & 1 attempt & 2 attempts & 5 attempts & 10 attempts \\
  \hline 
  \hline
  2 & 10 & 992 & 997 & 999 & 999 \\
  \hline
  2 & 100 & 994 & 999 & 1000 & 1000\\
  \hline
  2 & 1000 & 991 & 1000 & 1000 & 1000\\
  \hline
  3 & 10 & 987 & 992 & 997 & 998 \\
  \hline
  3 & 100 & 990 & 999 & 1000 & 1000\\
  \hline
  3 & 1000 & 993 & 999 & 1000 & 1000\\
  \hline
  4 & 10 & 992 & 1000 & 1000 & 1000\\
  \hline
  4 & 100 & 990 & 1000 & 1000 & 1000\\
  \hline
  4 & 1000 & 993 & 1000 & 1000 & 1000\\
  \hline
\end{tabular}
	\caption{The number of successes over 1000 iterations of finding a PNE for a random instance of the maximization variant of the 2-D Simultaneous Voronoi Game using a given number of attempts at myopic best response with random initial strategy profiles.}
	\label{tab:2dmax}
\end{table}


%

\begin{table}
	\centering
		\begin{tabular}{| c | c | c | c | c | c | c |}
  \hline                       
  $n$ & min/$m=2$ & min/$m=3$ & min/$m=4$ & max/$m=2$ & max/$m=3$ & max/$m=4$ \\
  \hline 
  \hline
  10 & 6 & 8 & 11 & 5 & 6 & 8  \\
  \hline
  100 & 12 & 22 & 37 & 8 & 9 & 12  \\
  \hline
  1000 & 42 & 87 & 173 & 11 & 12 & 16  \\
  \hline
  10000 & 117 & 361 & 764 & 12 & 15 & 19  \\
  \hline
\end{tabular}
	\caption{The maximum number of passes of myopic best response before an instance converged to a PNE for various 1-D Simultaneous Voronoi Games.}
	\label{tab:1ditrs}
\end{table}

\begin{table}
	\centering
		\begin{tabular}{| c | c | c | c | c | c | c |}
  \hline                       
  $n$ & min/$m=2$ & min/$m=3$ & min/$m=4$ & max/$m=2$ & max/$m=3$ & max/$m=4$ \\
  \hline 
  \hline
  10 & 7 & 8 & 13 & 7 & 8 & 10 \\
  \hline
  100 & 15 & 14 & 19 & 13 & 13 & 16 \\
  \hline
  1000 & 29 & 30 & 57 & 13 & 14 & 19 \\
  \hline
\end{tabular}
	\caption{The maximum number of passes of myopic best response before an instance successfully converged to a PNE for various 2-D Simultaneous Voronoi Games.}
	\label{tab:2ditrs}
\end{table}

\end{document}